\documentclass[11pt,a4paper]{amsart}

\usepackage[T1]{fontenc}
\usepackage[utf8]{inputenc}
\usepackage[english]{babel}
\usepackage[a4paper,margin=2.8cm]{geometry}

\usepackage{amsmath,amssymb,amsfonts,amsthm,mathtools}
\usepackage{mathrsfs}
\usepackage{bm}

\usepackage{tikz}
\usepackage{tikz-cd}
\usetikzlibrary{arrows.meta,calc,decorations.pathmorphing,positioning}

\usepackage{enumitem}
\usepackage{xcolor}

\usepackage[colorlinks=true,linkcolor=blue,citecolor=blue,urlcolor=blue]{hyperref}

\theoremstyle{plain}
\newtheorem{theorem}{Theorem}[section]
\newtheorem{proposition}[theorem]{Proposition}
\newtheorem{lemma}[theorem]{Lemma}
\newtheorem{corollary}[theorem]{Corollary}

\theoremstyle{definition}
\newtheorem{definition}[theorem]{Definition}
\newtheorem{remark}[theorem]{Remark}
\newtheorem{example}[theorem]{Example}

\newcommand{\PC}{\mathrm{PC}}

\newcommand{\id}{\mathrm{id}}

\title[Pairwise-comparison-valued cosurfaces]{Pairwise-comparison-valued cosurfaces: a projective framework for multi-scale relational structures}

\author{Jean-Pierre Magnot}
\address{{SFR MATHSTIC, LAREMA, Universit\'e d’Angers, 2 Bd Lavoisier, 
49045 Angers cedex 1, France;  Lyc\'ee Jeanne d'Arc, 40 avenue de Grande Bretagne, 63000 Clermont-Ferrand, 
France}; Lepage Research Institute, 17 novembra 1, 081 16 Presov, Slovakia}
\email{\small magnot@math.cnrs.fr; jean-pierr.magnot@ac-clermont.fr; jp.magnot@gmail.com}

\keywords{pairwise comparisons, cosurfaces, projective limits, stochastic semantics, inconsistency, multi-scale relational structures}
\subjclass[2020]{Primary 18A30; Secondary 15B99, 28A33, 60B10, 81T13}
\begin{document}
\begin{abstract}
We introduce cosurfaces with values in the group \(\PC_n(H)\) of \(H\)-valued reciprocal pairwise comparison matrices. 
The composition law is covariant on upper triangular coefficients and contravariant on lower triangular coefficients, which makes \(\PC_n(H)\) a natural target for oriented gluing constructions. 
Starting from a directed family of finite oriented discretizations, we define finite configuration spaces, coarse-graining maps induced by ordered refinements, and the associated universal projective limit. 
This yields a multi-scale organization of local comparative data in which global objects are reconstructed only through compatibility across scales. 
In the stochastic setting, projectively compatible probability laws define a cylindrical semantics on the limit space. 
We also introduce inconsistency observables, interpreted as discrete curvature-type defects measuring obstructions to global coherence. 
The resulting framework is simultaneously geometric, algebraic, and probabilistic, and suggests a foundational perspective on relational structures built from local comparisons rather than absolute observables.
\end{abstract}

\maketitle
\section{Introduction}

A recurrent theme in both the foundations of physics and the theory of comparative data is the tension between \emph{local relational information} and \emph{global coherent structure}.
In many situations, one does not begin with absolute observables attached once and for all to isolated objects.
One begins instead with relative statements, local comparisons, transition data, or compatibility conditions, from which a global object may or may not emerge.
Pairwise comparison theory offers one of the clearest mathematical incarnations of this principle: from Thurstone's law of comparative judgment to Saaty's analytic hierarchy process and its later developments, the primary data are relational and reciprocal rather than absolute \cite{Thurstone1927,Saaty1980,Saaty2008,Kulakowski2020}.

From this point of view, pairwise comparison matrices should not be regarded merely as practical devices for ranking.
They already encode a minimal relational ontology: what is given is not the intrinsic value of an item, but its relative position with respect to another one.
This perspective has generated a substantial mathematical literature on the extraction of weights, the reconstruction of priorities, and the treatment of incomplete comparative data \cite{CrawfordWilliams1985,Fichtner1986,Harker1987,IshizakaLusti2006,FedrizziGiove2007,BozokiFulopRonyai2010,Dijkstra2013}.
Such works show that even at the classical level, pairwise comparisons are not static numerical arrays but structured local data subject to reconstruction, optimization, and compatibility constraints.

A second central issue is inconsistency.
If local comparative judgments are primary, then coherence is no longer automatic; it becomes a problem.
The classical triangular obstructions measured by inconsistency indices are therefore not merely numerical imperfections but signals that a global comparative structure may fail to exist, or may exist only after correction.
This question has been studied from several complementary viewpoints, including axiomatizations of inconsistency indices, rankings of inconsistency, numerical investigations, completion procedures, and reduction algorithms \cite{Koczkodaj1993,BozokiRapcsak2008,BrunelliCanalFedrizzi2013,BrunelliFedrizzi2015,Brunelli2017,Brunelli2018,Csato2018,MagnotMazurekCernanova2023}.
In more recent work, random and geometric approaches have suggested that inconsistency itself should be treated as a structured observable, rather than as a purely exogenous perturbation \cite{Magnot2024}.

The present article starts from the idea that pairwise comparison data admit a more geometric and more foundational reading than is usually made explicit.
More precisely, when the coefficient set is replaced by a general group \(H\), reciprocal pairwise comparison matrices define a natural group \(\PC_n(H)\), and this simple algebraic observation already shifts the picture.
The data are no longer only comparative; they become group-valued and therefore compatible with constructions based on orientation, inversion, and ordered composition.
This bridge between pairwise comparisons and group-valued structures was already indicated in previous work, where links with discretized gauge-theoretic ideas were explored \cite{Magnot2018,Magnot2019}, linked with \cite{RovelliVidotto2015}.
The question then becomes: can one organize such comparative data in a way analogous to the organization of local geometric or gauge-type data?

This question is natural from the viewpoint of mathematical physics.
Lattice gauge theory, simplicial gauge theory, and related discrete formalisms assign group-valued data to local geometric pieces and recover coarser or global structures by composition rules compatible with subdivision \cite{Wilson1974,Kogut1979,ChristiansenHalvorsen2012}.
At a more conceptual level, topological quantum field theory, Chern--Simons theory, and higher gauge theory have taught us that orientation, gluing, and functoriality are not secondary technicalities but structural principles through which local information acquires global meaning \cite{Atiyah1988,Freed1995,BaezSchreiber2007}linked with scle problems \cite{Hossenfelder2013,Hossenfelder2021,Hossenfelder2022,RovelliVidotto2015}.
The present work does not identify pairwise comparisons with gauge theory.
Rather, it takes seriously the idea that comparative data, once endowed with an appropriate group-valued structure, may be organized by a similar local-to-global syntax.

This leads naturally to the notion of a \(\PC_n(H)\)-valued cosurface.
The term is meant to emphasize that the relevant data are attached not to points but to oriented pieces that admit decomposition and reversal.
In the stochastic setting, this perspective resonates with earlier works on Lie-group-valued stochastic measures, Yang--Mills fields on surfaces, and Markovian holonomy fields, where local group-valued data are organized through gluing and probabilistic consistency \cite{AlbeverioHoeghKrohnHolden1986,Sengupta1992,Sengupta2008,Levy2003,Levy2010}.
It also connects with the study of stochastic cosurfaces and their relation to topological field-theoretic structures \cite{Magnot2022}.
What is new here is that the group-valued labels are no longer primarily interpreted as holonomies of a gauge field, but as reciprocal comparative data.

A central philosophical point of the paper is that the global object should not be postulated first.
Instead, it should emerge from compatibility across scales.
For that reason, we begin with directed families of finite oriented discretizations.
At each scale, one obtains a finite configuration space of \(\PC_n(H)\)-valued assignments, and between scales one has coarse-graining maps induced by ordered refinements.
The corresponding projective limit is then the universal object of all compatible finite-level comparative configurations.
This projective viewpoint is not an auxiliary technical device; it is the mathematical expression of a foundational stance according to which a global relational object is accessible only through coherent families of partial descriptions.

Such a stance naturally calls for a cylindrical and projective probabilistic language.
Projective limits of probability spaces and cylindrical measures provide a classical framework for handling stochastic systems defined through compatible finite-level marginals \cite{Rao1971,Schwartz1973,Yamasaki1985}.
In our setting, a stochastic \(\PC_n(H)\)-valued cosurface is first of all a projectively compatible family of laws on finite configuration spaces, and only secondarily, when extension is available, a probability measure on the projective limit itself.
This means that stochasticity enters the theory at the same foundational level as refinement and compatibility: what is random is not an absolute state given once and for all, but a family of relational labels observed through compatible finite resolutions.

The comparison with other generalized stochastic frameworks is instructive.
In white-noise analysis and related theories, one often works with generalized random distributions, chaos expansions, and dual pairs of test and distribution spaces; Lévy-driven space-time white-noise equations provide a typical example of this distributional viewpoint \cite{LokkaOksendalProske2004}.
Our framework is of a different nature: it is not distributional in origin, but geometric and projective.
Likewise, in Glauber-type and birth--death dynamics on configuration spaces, randomness acts on the configuration itself through creation and annihilation mechanisms \cite{KondratievLytvynov2003,FinkelshteinKondratievKutoviy2011}.
Here, by contrast, the refinement support is fixed, and the randomness acts on comparative labels attached to oriented pieces.
Finally, recent work on random-time changes and fractional kinetics suggests another possible direction: once time-dependent compatible laws are introduced, one may envisage subordinated evolutions of pairwise-comparison-valued cosurfaces \cite{KochubeiKondratievDaSilva2020,Kondratiev2022}.
These comparisons help locate the present theory: it is neither a generalized process in the strict sense of white-noise analysis nor a branching process on evolving configuration spaces, but rather a projective stochastic theory of relational labels.

Another motivation comes from the conceptual role of inconsistency.
In the standard theory of pairwise comparisons, inconsistency is often treated as a quantity to minimize.
In the present framework, it acquires a more structural meaning.
Triangular defects become observables attached to oriented pieces and to their coarse-grained counterparts.
They can therefore be interpreted as discrete curvature-type quantities measuring the obstruction to the existence of a globally coherent comparative potential.
This interpretation does not replace the classical inconsistency literature; it reframes it.
The issue is no longer only to quantify deviation from consistency, but to understand how defects are localized, propagated, and transformed across scales.

The projective viewpoint also has a broader conceptual significance.
A global state space is often taken as primitive in both physics and applied mathematics.
Here, however, the primitive notion is rather that of compatible local relational data.
The global object is reconstructed only indirectly, through the universal property of the projective limit.
This makes the framework closer in spirit to relational and multi-scale approaches than to theories based on fully given absolute observables.
It also suggests that pairwise-comparison-valued cosurfaces may serve as a simple but nontrivial laboratory for thinking about relational structures in a mathematically controlled form.

The present article therefore pursues two goals.
The first is mathematical: to introduce the group \(\PC_n(H)\), define \(\PC_n(H)\)-valued cosurfaces, construct the associated projective system of finite-scale configurations, and develop the corresponding stochastic and cylindrical semantics.
The second is conceptual: to show that local comparative information can be organized in a way that is simultaneously geometric, algebraic, and probabilistic, and that this organization offers a natural foundational language for multi-scale relational structures.
In this sense, our aim is not to force one privileged semantic interpretation, but to isolate a common mathematical framework underlying several possible readings.

The paper is organized as follows.
In Section~2 we define the pairwise comparison group \(\PC_n(H)\) and study its coherence defects.
Section~3 introduces \(\PC_n(H)\)-valued cosurfaces and their compatibility with orientation reversal and ordered gluing.
In Section~4 we construct the finite-scale configuration spaces and the associated projective system.
Section~5 develops the stochastic semantics in terms of projectively compatible probability laws, cylinder sets, and cylindrical observables.
Section~6 is devoted to inconsistency observables and to the transversal geometric, algebraic, probabilistic, and causal interpretations of the construction.
Finally, the outlook discusses possible extensions toward generalized stochastic processes, time-dependent and subordinated dynamics, and broader semantic regimes, including causal and generalized probabilistic viewpoints \cite{MilzSakuldeePollockModi2020}.

To summarize the guiding intuition in one sentence: pairwise-comparison-valued cosurfaces provide a framework in which one begins with local comparisons rather than absolute observables, and in which global structure appears only as the projective coherence of relational data across scales.
\section{The pairwise comparison group \(\PC_n(H)\)}

\subsection{Reciprocal \(H\)-valued pairwise comparison matrices}

Let \(H\) be a group, written multiplicatively, with neutral element \(e\).
Fix \(n\geq 2\), and let
\[
U_n:=\{(i,j)\in \{1,\dots,n\}^2\mid i<j\}.
\]
We write \(N=\# U_n=\frac{n(n-1)}2\).

\begin{definition}
The set of \(H\)-valued reciprocal pairwise comparison matrices of order \(n\) is
\[
\PC_n(H):=
\left\{
A=(a_{ij})_{1\leq i,j\leq n}\in M_n(H)
\;\middle|\;
a_{ii}=e,\quad a_{ji}=a_{ij}^{-1}\ \text{for all } i,j
\right\}.
\]
\end{definition}

Thus an element of \(\PC_n(H)\) is completely determined by its upper triangular coefficients \((a_{ij})_{(i,j)\in U_n}\).
In particular, there is a canonical bijection
\[
\Phi:\PC_n(H)\longrightarrow H^{U_n},
\qquad
\Phi(A)=(a_{ij})_{(i,j)\in U_n}.
\]

\begin{remark}
When \(H=\mathbb{R}_{+}^{\times}\), one recovers the usual multiplicative reciprocal pairwise comparison matrices.
The present framework simply replaces the coefficient group \(\mathbb{R}_{+}^{\times}\) by an arbitrary group \(H\), possibly noncommutative.
\end{remark}

\subsection{Covariant/contravariant composition law}

The refinement formalism developed later requires a natural composition law on \(\PC_n(H)\).
Since the upper triangular part determines the whole matrix, the most natural choice is to multiply coefficients covariantly on the upper triangular side and to recover the lower triangular side by reciprocity.

\begin{definition}
For \(A=(a_{ij})\) and \(B=(b_{ij})\) in \(\PC_n(H)\), define \(A\star B\in \PC_n(H)\) by
\[
(A\star B)_{ij}:=
\begin{cases}
a_{ij}b_{ij}, & i<j,\\[0.3em]
e, & i=j,\\[0.3em]
\bigl((A\star B)_{ji}\bigr)^{-1}, & i>j.
\end{cases}
\]
\end{definition}

Equivalently, for \(i>j\),
\[
(A\star B)_{ij}
=
(a_{ji}b_{ji})^{-1}
=
b_{ji}^{-1}a_{ji}^{-1}
=
b_{ij}a_{ij}.
\]
Hence the law is indeed covariant on the upper triangular coefficients and contravariant on the lower triangular ones.

\begin{remark}
The reversal of order on the lower triangular side is not an additional convention but a direct consequence of the reciprocity relation \(a_{ji}=a_{ij}^{-1}\).
\end{remark}

\subsection{Group structure}

\begin{proposition}
The law \(\star\) endows \(\PC_n(H)\) with a group structure.
\end{proposition}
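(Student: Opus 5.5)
The natural strategy is to transport the group structure of the direct product \(H^{U_n}\) through the canonical bijection \(\Phi:\PC_n(H)\to H^{U_n}\). Equip \(H^{U_n}\) with the componentwise product \((x_{ij})(y_{ij})=(x_{ij}y_{ij})\). This is a group, namely the direct product of \(N\) copies of \(H\). The plan is to show that \(\star\) is exactly the structure transported by \(\Phi\), that is,
\[
\Phi(A\star B)=\Phi(A)\,\Phi(B)\qquad\text{for all }A,B\in\PC_n(H).
\]
Once this identity holds, every group axiom for \(\star\) follows from the corresponding axiom in \(H^{U_n}\), because \(\Phi\) is a bijection.

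The first step is to check that \(\star\) is well defined as a map \(\PC_n(H)\times\PC_n(H)\to\PC_n(H)\). The diagonal entries equal \(e\) by definition. For \(i>j\), the entry is defined as the inverse of the upper entry \((A\star B)_{ji}=a_{ji}b_{ji}\). This definition is not circular, since for \(i>j\) the index pair \((j,i)\) lies in \(U_n\). Reciprocity \(c_{ji}=c_{ij}^{-1}\) then holds for all \(i\neq j\), taking into account that \((x^{-1})^{-1}=x\) for the case \(i<j\). So \(A\star B\in\PC_n(H)\), and reading off upper triangular coefficients gives the displayed identity immediately.

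The second step is to write down the structure explicitly, as a sanity check.
\begin{itemize}[label={}]
\item \emph{Associativity.} \((A\star B)\star C\) and \(A\star(B\star C)\) have the same upper coefficients \(a_{ij}b_{ij}c_{ij}\), by associativity in \(H\). Since an element of \(\PC_n(H)\) is determined by its upper coefficients, the two matrices are equal.
\item \emph{Neutral element.} It is the matrix \(E\) with all entries equal to \(e\), which lies in \(\PC_n(H)\).
\item \emph{Inverse.} The inverse of \(A\) is \(A^{\star-1}:=\Phi^{-1}\bigl((a_{ij}^{-1})_{(i,j)\in U_n}\bigr)\). Its lower entries are \(a_{ji}\) for \(i>j\), so it is simply the transpose \(A^{T}\), which is \(\PC\)-reciprocal. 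One checks on upper coefficients that \(A\star A^{T}=A^{T}\star A=E\).
\end{itemize}

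There is no real obstacle here. The only point needing care is the well-definedness in the first step. Equality in \(\PC_n(H)\) can be tested on upper coefficients only, but this requires first confirming that \(\star\) lands in \(\PC_n(H)\). The lower coefficients then multiply in reversed order, \(b_{ij}a_{ij}\), as already noted. This reversal is harmless for the axioms and does not require \(H\) to be abelian. As a byproduct, the argument shows that \(\Phi\) is a group isomorphism \(\PC_n(H)\cong H^{N}\).
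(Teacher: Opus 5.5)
Your proof is correct and follows the paper's argument: transport the componentwise group law of \(H^{U_n}\) through the bijection \(\Phi\), with the all-\(e\) matrix as neutral element and the matrix with entries \(a_{ij}^{-1}\) as the \(\star\)-inverse. Your explicit check that \(\star\) lands in \(\PC_n(H)\), and your observation that this inverse is simply the transpose \(A^{T}\), are small additions that the paper leaves implicit.
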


\begin{proof}
Transport the product of \(H^{U_n}\) through the bijection \(\Phi\).
More precisely, if
\[
x=(x_{ij})_{(i,j)\in U_n},
\qquad
y=(y_{ij})_{(i,j)\in U_n}
\]
belong to \(H^{U_n}\), define
\[
(x\cdot y)_{ij}:=x_{ij}y_{ij}.
\]
Then \(H^{U_n}\) is a group under componentwise multiplication.
By construction,
\[
\Phi(A\star B)=\Phi(A)\cdot \Phi(B)
\]
for all \(A,B\in \PC_n(H)\).
Therefore \(\star\) is associative.

The neutral element is the matrix \(\mathbf{1}\in \PC_n(H)\) defined by
\[
\mathbf{1}_{ij}=e
\qquad
\text{for all }1\leq i,j\leq n.
\]
Indeed, for \(i<j\),
\[
(A\star \mathbf{1})_{ij}=a_{ij}e=a_{ij},
\qquad
(\mathbf{1}\star A)_{ij}=ea_{ij}=a_{ij},
\]
and the lower triangular part follows by reciprocity.

Finally, the inverse of \(A=(a_{ij})\in \PC_n(H)\) for the law \(\star\) is the matrix \(A^{-\star}\in \PC_n(H)\) given by
\[
(A^{-\star})_{ij}=
\begin{cases}
a_{ij}^{-1}, & i<j,\\[0.3em]
e, & i=j,\\[0.3em]
a_{ij}^{-1}, & i>j,
\end{cases}
\]
that is, more intrinsically,
\[
(A^{-\star})_{ij}=
\begin{cases}
a_{ij}^{-1}, & i<j,\\[0.3em]
e, & i=j,\\[0.3em]
a_{ij}^{-1}=(a_{ji}^{-1})^{-1}=a_{ji}, & i>j.
\end{cases}
\]
Indeed, for \(i<j\),
\[
(A\star A^{-\star})_{ij}=a_{ij}a_{ij}^{-1}=e,
\]
hence \(A\star A^{-\star}=\mathbf{1}\), and similarly \(A^{-\star}\star A=\mathbf{1}\).
\end{proof}

\begin{corollary}
The map
\[
\Phi:\bigl(\PC_n(H),\star\bigr)\longrightarrow H^{U_n}
\]
is an isomorphism of groups.
In particular,
\[
\PC_n(H)\cong H^{\frac{n(n-1)}2}.
\]
\end{corollary}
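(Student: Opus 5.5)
The plan is to show that \(\Phi\) is a bijective homomorphism, using what the proof of the preceding Proposition already establishes.

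\emph{Bijectivity.} Injectivity holds because an element \(A\in\PC_n(H)\) is determined by its upper triangular coefficients: the diagonal is forced to be \(e\), and \(a_{ji}=a_{ij}^{-1}\) fixes the lower part. For surjectivity, take \(x=(x_{ij})_{(i,j)\in U_n}\in H^{U_n}\). I would define a matrix \(A\) by
\[
a_{ij}=x_{ij}\ (i<j),\qquad a_{ii}=e,\qquad a_{ij}=x_{ji}^{-1}\ (i>j).
\]
Then \(A\in\PC_n(H)\) and \(\Phi(A)=x\). Checking reciprocity here means verifying \(a_{ji}=a_{ij}^{-1}\) in both orderings of \(i\) and \(j\), using \((h^{-1})^{-1}=h\).

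\emph{Homomorphism.} The identity \(\Phi(A\star B)=\Phi(A)\cdot\Phi(B)\) is immediate from the definition of \(\star\). For \((i,j)\in U_n\), the \((i,j)\) entry of \(A\star B\) is \(a_{ij}b_{ij}\). This is precisely the \((i,j)\) component of the componentwise product \(\Phi(A)\cdot\Phi(B)\). A bijective homomorphism between groups is an isomorphism, since its inverse set map is automatically a homomorphism. The inverse here is exactly the reconstruction map above.

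\emph{The final statement.} To obtain \(\PC_n(H)\cong H^{n(n-1)/2}\), fix any enumeration of \(U_n\), for instance lexicographic order. This gives a bijection \(U_n\cong\{1,\dots,N\}\) with \(N=n(n-1)/2\). Reindexing coordinates along it is a group isomorphism \(H^{U_n}\cong H^{N}\), because the product on both sides is componentwise.

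There is no real obstacle. The only point requiring care is that surjectivity does use the reciprocity convention. In a noncommutative \(H\) one should note that no compatibility condition among different upper triangular entries is imposed, so every family in \(H^{U_n}\) is realised. Consistency conditions such as \(a_{ij}a_{jk}=a_{ik}\) are not part of the definition of \(\PC_n(H)\).
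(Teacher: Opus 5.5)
Your proof is correct and follows the same route as the paper, which obtains the corollary directly from the proof of the preceding proposition. There, \(\Phi\) is the canonical bijection noted right after the definition of \(\PC_n(H)\), and \(\star\) is shown to satisfy \(\Phi(A\star B)=\Phi(A)\cdot\Phi(B)\); your explicit surjectivity check and the reindexing \(H^{U_n}\cong H^{N}\) spell out details the paper leaves implicit.
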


\begin{remark}
If \(H\) is abelian, then \(\star\) is simply the usual coefficientwise multiplication.
If \(H\) is noncommutative, the upper triangular coefficients still multiply componentwise, but the lower triangular part carries the opposite order induced by inversion.
\end{remark}

\subsection{Coherence and inconsistency defects}

The next notion plays the role of a local curvature or defect observable.

\begin{definition}
Let \(A=(a_{ij})\in \PC_n(H)\), and let \(i,j,k\) be pairwise distinct.
The \emph{triangular defect} associated with the ordered triple \((i,j,k)\) is
\[
\kappa_{ijk}(A):=a_{ij}a_{jk}a_{ki}\in H.
\]
\end{definition}

The quantity \(\kappa_{ijk}(A)\) measures the failure of multiplicative transitivity along the cycle \(i\to j\to k\to i\).
In particular, \(A\) is perfectly coherent on the triple \((i,j,k)\) if and only if \(\kappa_{ijk}(A)=e\).

\begin{lemma}
For every \(A\in \PC_n(H)\) and every pairwise distinct \(i,j,k\),
\[
\kappa_{ikj}(A)=\kappa_{ijk}(A)^{-1}.
\]
\end{lemma}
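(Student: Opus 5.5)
The plan is a direct computation from the definition of $\kappa$, using only the reciprocity relation $a_{ji}=a_{ij}^{-1}$ built into $\PC_n(H)$. No commutativity of $H$ is needed; the one thing to watch is the order of the factors, which reverses under inversion.

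First, write out $\kappa_{ikj}(A)$ from the definition of the triangular defect applied to the ordered triple $(i,k,j)$:
\[
\kappa_{ikj}(A)=a_{ik}\,a_{kj}\,a_{ji}.
\]
Second, use reciprocity on each factor, $a_{ik}=a_{ki}^{-1}$, $a_{kj}=a_{jk}^{-1}$ and $a_{ji}=a_{ij}^{-1}$, to get
\[
\kappa_{ikj}(A)=a_{ki}^{-1}\,a_{jk}^{-1}\,a_{ij}^{-1}.
\]
Third, apply the group identity $(xyz)^{-1}=z^{-1}y^{-1}x^{-1}$ with $x=a_{ij}$, $y=a_{jk}$, $z=a_{ki}$. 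The right-hand side above is then exactly $(a_{ij}a_{jk}a_{ki})^{-1}=\kappa_{ijk}(A)^{-1}$.

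There is no real obstacle here. The only point needing care is to check that the reversed cycle $i\to k\to j\to i$ traverses the edges in exactly the reverse order of $i\to j\to k\to i$. That is what makes the order-reversing inversion in a possibly noncommutative $H$ match up factor by factor, without any need to rearrange terms.
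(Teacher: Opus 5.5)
Your proof is correct and is essentially the paper's own computation: expand $\kappa_{ikj}(A)=a_{ik}a_{kj}a_{ji}$, rewrite the factors using reciprocity, and recognize the result as the order-reversed inverse of $a_{ij}a_{jk}a_{ki}$. The only cosmetic difference is that the paper rewrites just $a_{kj}$ and $a_{ji}$ and leaves $a_{ik}=a_{ki}^{-1}$ implicit, whereas you convert all three factors explicitly.
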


\begin{proof}
Using reciprocity,
\[
\kappa_{ikj}(A)=a_{ik}a_{kj}a_{ji}
=a_{ik}a_{jk}^{-1}a_{ij}^{-1}
=(a_{ij}a_{jk}a_{ki})^{-1}
=\kappa_{ijk}(A)^{-1}.
\]
\end{proof}

This suggests the following global notion.

\begin{definition}
A matrix \(A=(a_{ij})\in \PC_n(H)\) is called \emph{coherent} if
\[
\kappa_{ijk}(A)=e
\qquad
\text{for all pairwise distinct } i,j,k.
\]
\end{definition}

\begin{proposition}
For \(A=(a_{ij})\in \PC_n(H)\), the following assertions are equivalent.
\begin{enumerate}[label=\textup{(\roman*)}]
\item \(A\) is coherent.
\item There exist elements \(\lambda_1,\dots,\lambda_n\in H\) such that
\[
a_{ij}=\lambda_i\lambda_j^{-1}
\qquad
\text{for all }1\leq i,j\leq n.
\]
\end{enumerate}
\end{proposition}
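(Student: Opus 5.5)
The plan is to prove the two implications separately, the converse by a direct substitution and the forward direction by building an explicit potential anchored at the index \(1\).

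For (ii)\(\Rightarrow\)(i), substitute \(a_{ij}=\lambda_i\lambda_j^{-1}\) into the triangular defect and let the product telescope:
\[
\kappa_{ijk}(A)=\lambda_i\lambda_j^{-1}\lambda_j\lambda_k^{-1}\lambda_k\lambda_i^{-1}=e .
\]
The order of the factors is preserved throughout, so this works verbatim for noncommutative \(H\).

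For (i)\(\Rightarrow\)(ii), choose the first index as a base point and set \(\lambda_i:=a_{i1}\) for \(i=1,\dots,n\), so that \(\lambda_1=a_{11}=e\). By reciprocity, \(\lambda_j^{-1}=a_{j1}^{-1}=a_{1j}\). The goal is then the identity
\[
a_{ij}=a_{i1}a_{1j}\qquad\text{for all } i,j.
\]
This is verified case by case. If \(i=j\), the right-hand side is \(a_{i1}a_{1i}=e=a_{ii}\). If \(i=1\) or \(j=1\), the identity is immediate because \(a_{11}=e\). The remaining case is \(i,j,1\) pairwise distinct. There coherence gives \(\kappa_{ij1}(A)=a_{ij}a_{j1}a_{1i}=e\), hence \(a_{ij}=a_{1i}^{-1}a_{j1}^{-1}=a_{i1}a_{1j}\), using reciprocity twice.

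The only delicate point, and the main obstacle, is keeping the multiplication order correct when \(H\) is noncommutative. One must solve \(a_{ij}a_{j1}a_{1i}=e\) for \(a_{ij}\) by right-multiplying by \((a_{j1}a_{1i})^{-1}=a_{1i}^{-1}a_{j1}^{-1}=a_{i1}a_{1j}\). One must also check that the chosen convention \(a_{ij}=\lambda_i\lambda_j^{-1}\), rather than \(\lambda_i^{-1}\lambda_j\), matches the cyclic order \(i\to j\to k\to i\) in the definition of \(\kappa_{ijk}\); with the choice \(\lambda_i=a_{i1}\) it does. The earlier lemma \(\kappa_{ikj}=\kappa_{ijk}^{-1}\) is not strictly needed, since only defects of triples with last index \(1\) are used, and coherence assumes all ordered triples anyway. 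Finally, I would note that the potential \((\lambda_i)\) is unique up to right multiplication \(\lambda_i\mapsto\lambda_i h\) by a common \(h\in H\). This is not part of the statement, but it clarifies the gauge-type interpretation.
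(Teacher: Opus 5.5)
Your proof is correct and follows the same route as the paper: telescoping for (ii)$\Rightarrow$(i), and for the converse anchoring the potential at a base index ($\lambda_i:=a_{ir}$ in the paper, $r=1$ in yours) and solving $\kappa_{ijr}(A)=e$ for $a_{ij}$ with the factors kept in the correct order. Your explicit treatment of the degenerate cases $i=j$, $i=1$, $j=1$ is a small improvement. The paper invokes $\kappa_{ijr}$ for all $i,j$, even though the defect is only defined for pairwise distinct indices.
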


\begin{proof}
Assume first that \(a_{ij}=\lambda_i\lambda_j^{-1}\) for all \(i,j\).
Then
\[
\kappa_{ijk}(A)
=
(\lambda_i\lambda_j^{-1})
(\lambda_j\lambda_k^{-1})
(\lambda_k\lambda_i^{-1})
=e,
\]
hence \(A\) is coherent.

Conversely, assume that \(A\) is coherent.
Fix an index \(r\in\{1,\dots,n\}\), and define
\[
\lambda_i:=a_{ir}
\qquad
\text{for } i=1,\dots,n.
\]
Since \(\kappa_{ijr}(A)=e\), we have
\[
a_{ij}a_{jr}a_{ri}=e.
\]
Using \(a_{ri}=a_{ir}^{-1}\), this becomes
\[
a_{ij}a_{jr}=a_{ir},
\]
hence
\[
a_{ij}=a_{ir}a_{jr}^{-1}=\lambda_i\lambda_j^{-1}.
\]
Therefore (ii) holds.
\end{proof}

\begin{remark}
The previous proposition shows that coherence means that the matrix \(A\) is generated by a system of group-valued ``potentials'' \((\lambda_i)\), and that the triangular defects \(\kappa_{ijk}\) are the basic obstructions to the existence of such a potential representation.
This interpretation will later motivate the reading of inconsistency as a discrete curvature-type phenomenon.
\end{remark}
\section{\(\PC_n(H)\)-valued cosurfaces}

\subsection{Oriented gluing systems}

In order to define cosurfaces with values in \(\PC_n(H)\), we first isolate the minimal geometric syntax needed for orientation reversal and ordered gluing.

\begin{definition}
An \emph{oriented gluing system} is a pair
\[
(\Sigma,\mathcal R),
\]
where:
\begin{enumerate}[label=\textup{(\roman*)}]
\item \(\Sigma\) is a set equipped with an involution
\[
\Sigma\to\Sigma,\qquad \sigma\mapsto \bar\sigma,
\]
called \emph{reversal of orientation}, and satisfying
\[
\bar{\bar{\sigma}}=\sigma
\qquad
\text{for all }\sigma\in\Sigma;
\]

\item \(\mathcal R\) is a set of admissible refinement relations of the form
\[
\sigma \leadsto (\sigma_1,\dots,\sigma_k),
\qquad
k\geq 1,
\]
with \(\sigma,\sigma_1,\dots,\sigma_k\in\Sigma\), satisfying:
\begin{enumerate}[label=\textup{(\alph*)}]
\item \emph{unit:}
\[
\sigma\leadsto(\sigma)
\qquad
\text{for all }\sigma\in\Sigma;
\]

\item \emph{compatibility with orientation:} if
\[
\sigma\leadsto(\sigma_1,\dots,\sigma_k),
\]
then
\[
\bar{\sigma}\leadsto(\bar{\sigma}_k,\dots,\bar{\sigma}_1);
\]

\item \emph{associativity of refinement:} if
\[
\sigma\leadsto(\sigma_1,\dots,\sigma_k)
\]
and for each \(i\in\{1,\dots,k\}\),
\[
\sigma_i\leadsto(\tau_{i,1},\dots,\tau_{i,m_i}),
\]
then
\[
\sigma\leadsto(\tau_{1,1},\dots,\tau_{1,m_1},\tau_{2,1},\dots,\tau_{k,m_k}).
\]
\end{enumerate}
\end{enumerate}
\end{definition}

The point of this definition is that the geometry is encoded only through oriented pieces and admissible ordered decompositions. No ambient manifold is required at this stage.

\begin{remark}
In geometric examples, \(\Sigma\) will typically be a family of oriented hypersurfaces, faces, or codimension-one cells, and a refinement relation
\[
\sigma\leadsto(\sigma_1,\dots,\sigma_k)
\]
will mean that \(\sigma\) is decomposed into the ordered family \((\sigma_1,\dots,\sigma_k)\).
The order is essential as soon as the coefficient group \(H\) is noncommutative.
\end{remark}

\subsection{Orientation reversal and inversion}

The compatibility between orientation reversal and the group law on \(\PC_n(H)\) is governed by the inversion map.

\begin{definition}
Let
\[
\iota:\PC_n(H)\to\PC_n(H),\qquad A\mapsto A^{-\star},
\]
be the inversion map in the group \((\PC_n(H),\star)\).
\end{definition}

The next elementary fact will be used repeatedly.

\begin{lemma}\label{lem:inverse-product}
For every \(k\geq 1\) and every \(A_1,\dots,A_k\in\PC_n(H)\),
\[
(A_1\star\cdots\star A_k)^{-\star}
=
A_k^{-\star}\star\cdots\star A_1^{-\star}.
\]
\end{lemma}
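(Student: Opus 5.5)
The identity is the standard rule for inverting a product in any group, so the plan is to prove it by induction on \(k\). The only inputs are associativity of \(\star\), the neutral element \(\mathbf{1}\), and the inverse \(A^{-\star}\), all established when we showed that \(\star\) makes \(\PC_n(H)\) a group. Two alternative routes are worth recording. One could transport the statement through the isomorphism \(\Phi\) and verify it coordinatewise in \(H^{U_n}\). There, for each \((i,j)\in U_n\), it reduces to \((x_1\cdots x_k)^{-1}=x_k^{-1}\cdots x_1^{-1}\) in \(H\). Equally, one could argue purely inside the abstract group \((\PC_n(H),\star)\). I will follow the abstract route, since it avoids coordinates entirely.

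For \(k=1\) there is nothing to prove. For \(k=2\), I will use associativity to compute
\[
(A_1\star A_2)\star(A_2^{-\star}\star A_1^{-\star})
=A_1\star(A_2\star A_2^{-\star})\star A_1^{-\star}
=A_1\star A_1^{-\star}=\mathbf{1}.
\]
The symmetric computation, with the factors in the other order, also gives \(\mathbf{1}\). By uniqueness of inverses in a group, it follows that \((A_1\star A_2)^{-\star}=A_2^{-\star}\star A_1^{-\star}\).

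For the inductive step, set \(P=A_1\star\cdots\star A_{k-1}\). By associativity, \(A_1\star\cdots\star A_k=P\star A_k\). The case \(k=2\) gives \((P\star A_k)^{-\star}=A_k^{-\star}\star P^{-\star}\). The induction hypothesis then expands \(P^{-\star}\) as \(A_{k-1}^{-\star}\star\cdots\star A_1^{-\star}\), which yields the claim.

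There is no real obstacle here. The only point needing care is that the unbracketed products are well defined, and this is ensured by the associativity already proved through \(\Phi\). One might be tempted to check the formula entrywise on the lower triangular part, where the order of multiplication is reversed. That is unnecessary, because \(A^{-\star}\) is by definition the group inverse, so the abstract argument suffices.
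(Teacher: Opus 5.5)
Your proof is correct and takes the same route as the paper. The paper invokes the standard inversion rule in the group \((\PC_n(H),\star)\) with an induction on \(k\) whose key case is \(k=2\), \((A\star B)^{-\star}=B^{-\star}\star A^{-\star}\); you simply write out the verification of that case via associativity and uniqueness of inverses, which the paper leaves implicit.
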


\begin{proof}
This is the usual inversion rule in any group, applied to \((\PC_n(H),\star)\).
The statement follows by induction on \(k\), the case \(k=2\) being
\[
(A\star B)^{-\star}=B^{-\star}\star A^{-\star}.
\qedhere
\]
\end{proof}

\subsection{\(\PC_n(H)\)-valued cosurfaces}

We now define the central object of this article.

\begin{definition}
Let \((\Sigma,\mathcal R)\) be an oriented gluing system.
A \emph{\(\PC_n(H)\)-valued cosurface} on \((\Sigma,\mathcal R)\) is a map
\[
C:\Sigma\to \PC_n(H)
\]
such that:
\begin{enumerate}[label=\textup{(\roman*)}]
\item \emph{orientation rule:}
\[
C(\bar{\sigma})=C(\sigma)^{-\star}
\qquad
\text{for all }\sigma\in\Sigma;
\]

\item \emph{multiplicativity under ordered gluing:} whenever
\[
\sigma\leadsto(\sigma_1,\dots,\sigma_k),
\]
one has
\[
C(\sigma)=C(\sigma_1)\star\cdots\star C(\sigma_k).
\]
\end{enumerate}
\end{definition}

\begin{remark}
A \(\PC_n(H)\)-valued cosurface may be viewed as a multiplicative assignment of pairwise comparison data to oriented pieces of a geometric decomposition.
The refinement rule says that the value on a coarse piece is recovered from the ordered product of the values on the finer pieces.
\end{remark}

\subsection{Compatibility of the axioms}

The two defining axioms are compatible with one another precisely because orientation reversal in the gluing system is reversed at the level of ordered products.

\begin{proposition}\label{prop:orientation-gluing-compatible}
Let \(C:\Sigma\to \PC_n(H)\) satisfy the multiplicativity rule.
Assume moreover that
\[
C(\bar{\sigma})=C(\sigma)^{-\star}
\qquad
\text{for all }\sigma\in\Sigma.
\]
Then for every admissible refinement
\[
\sigma\leadsto(\sigma_1,\dots,\sigma_k),
\]
one has
\[
C(\bar{\sigma})
=
C(\bar{\sigma}_k)\star\cdots\star C(\bar{\sigma}_1).
\]
\end{proposition}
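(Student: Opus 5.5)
The argument is a direct chain of equalities combining the three ingredients already at hand: the orientation rule, the multiplicativity rule applied to the given refinement, and Lemma~\ref{lem:inverse-product}. Fix an admissible refinement \(\sigma\leadsto(\sigma_1,\dots,\sigma_k)\).

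First, I apply the orientation rule to \(\sigma\) to get \(C(\bar\sigma)=C(\sigma)^{-\star}\). Next, I use multiplicativity under ordered gluing for the refinement of \(\sigma\) to rewrite \(C(\sigma)=C(\sigma_1)\star\cdots\star C(\sigma_k)\). Then Lemma~\ref{lem:inverse-product} reverses the order of the product under inversion:
\[
C(\bar\sigma)=\bigl(C(\sigma_1)\star\cdots\star C(\sigma_k)\bigr)^{-\star}
=C(\sigma_k)^{-\star}\star\cdots\star C(\sigma_1)^{-\star}.
\]
Finally, I apply the orientation rule once more, now to each \(\sigma_i\), replacing \(C(\sigma_i)^{-\star}\) by \(C(\bar\sigma_i)\). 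This yields the claimed identity \(C(\bar\sigma)=C(\bar\sigma_k)\star\cdots\star C(\bar\sigma_1)\).

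As a consistency remark, I would also observe the following. By axiom (b) of the oriented gluing system, \(\bar\sigma\leadsto(\bar\sigma_k,\dots,\bar\sigma_1)\) is itself admissible. Hence multiplicativity applied to this refinement gives the same right-hand side directly. The content of the proposition is therefore that the two axioms of a cosurface impose the same value on \(\bar\sigma\), i.e.\ they are not in conflict. I would mention this in one sentence to explain the interpretation of ``compatibility''.

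There is no real obstacle here. The only point requiring care is the order reversal in the inverse of a product, which matters when \(H\) (hence \(\PC_n(H)\)) is noncommutative. That step is handled exactly by Lemma~\ref{lem:inverse-product}, and it matches the reversal built into axiom (b).
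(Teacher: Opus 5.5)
Your proposal is correct and matches the paper's proof: orientation rule on $\sigma$, multiplicativity on the given refinement, the inversion-of-products lemma, then the orientation rule on each $\sigma_i$. The paper also opens by invoking axiom (b), $\bar\sigma\leadsto(\bar\sigma_k,\dots,\bar\sigma_1)$, which is the role you gave to your consistency remark.
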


\begin{proof}
By the compatibility of \(\mathcal R\) with orientation,
\[
\bar{\sigma}\leadsto(\bar{\sigma}_k,\dots,\bar{\sigma}_1).
\]
On the other hand,
\[
C(\bar{\sigma})
=
C(\sigma)^{-\star}
=
(C(\sigma_1)\star\cdots\star C(\sigma_k))^{-\star}
\]
by multiplicativity on \(\sigma\).
Applying Lemma~\ref{lem:inverse-product}, we obtain
\[
C(\bar{\sigma})
=
C(\sigma_k)^{-\star}\star\cdots\star C(\sigma_1)^{-\star}.
\]
Using the orientation rule again gives
\[
C(\bar{\sigma})
=
C(\bar{\sigma}_k)\star\cdots\star C(\bar{\sigma}_1),
\]
as required.
\end{proof}

Thus the order reversal built into the refinement syntax matches exactly the group-theoretic inversion on \(\PC_n(H)\).

\subsection{Elementary examples}

We record two basic classes of examples.

\begin{example}[Finite cell decompositions]
Let \(K\) be a finite cell decomposition of a space \(M\), and let \(\Sigma\) be the set of oriented codimension-one cells of \(K\).
For each oriented cell \(\sigma\), the opposite orientation gives another element \(\bar{\sigma}\in\Sigma\).

If \(K'\) is a subdivision of \(K\), each oriented codimension-one cell \(\sigma\) of \(K\) is decomposed into an ordered family of oriented codimension-one cells of \(K'\),
\[
\sigma\leadsto(\sigma_1,\dots,\sigma_k).
\]
This defines an oriented gluing system.
A \(\PC_n(H)\)-valued cosurface on this system is then an assignment of a pairwise comparison matrix to each oriented cell, compatible with orientation reversal and subdivision.
\end{example}

\begin{example}[Triangulated hypersurfaces]
Let \(M\) be an oriented manifold and let \(\Sigma\) be a finite family of oriented piecewise smooth hypersurfaces, each of which admits compatible triangulated refinements.
If
\[
\sigma\leadsto(\sigma_1,\dots,\sigma_k)
\]
means that \(\sigma\) is cut into an ordered family of oriented sub-hypersurfaces, then any \(\PC_n(H)\)-valued cosurface is a rule assigning comparison data to each oriented hypersurface in such a way that the data on a whole hypersurface are recovered by ordered multiplication of the data on its pieces.
\end{example}

\subsection{Coherence along a cosurface}

The triangular defects introduced in Section~2 induce local inconsistency observables on cosurfaces.

\begin{definition}
Let \(C:\Sigma\to\PC_n(H)\) be a \(\PC_n(H)\)-valued cosurface.
For every \(\sigma\in\Sigma\) and every triple of pairwise distinct indices \(i,j,k\), define
\[
\kappa_{ijk}^C(\sigma):=\kappa_{ijk}(C(\sigma))
=
C(\sigma)_{ij}\,C(\sigma)_{jk}\,C(\sigma)_{ki}.
\]
\end{definition}

\begin{definition}
A \(\PC_n(H)\)-valued cosurface \(C\) is said to be \emph{locally coherent} if for every \(\sigma\in\Sigma\) and every pairwise distinct \(i,j,k\),
\[
\kappa_{ijk}^C(\sigma)=e.
\]
\end{definition}

Thus local incoherence of pairwise comparisons becomes a field of defect observables carried by oriented geometric pieces.

\begin{remark}
At this stage, coherence is imposed piecewise.
Later, once configuration spaces and refinements are introduced, one may also study how the defects \(\kappa_{ijk}^C(\sigma)\) propagate across scales under coarse graining.
This is one of the reasons for viewing these quantities as curvature-type observables.
\end{remark}

\subsection{Generated cosurfaces}

The multiplicativity axiom implies that a cosurface is often determined by its values on elementary pieces.

\begin{definition}
A subset \(E\subset \Sigma\) is called \emph{generating} if every \(\sigma\in\Sigma\) can be obtained through a finite chain of admissible refinements from elements of \(E\) and their orientation reversals.
\end{definition}

\begin{proposition}
Let \(E\subset \Sigma\) be a generating subset.
Assume that a map
\[
c:E\to \PC_n(H)
\]
is given and extended to \(\bar E:=\{\bar{\sigma}\mid \sigma\in E\}\) by
\[
c(\bar{\sigma})=c(\sigma)^{-\star}.
\]
If the refinement relations among elements generated by \(E\) are compatible with the ordered product in \(\PC_n(H)\), then there exists at most one \(\PC_n(H)\)-valued cosurface \(C\) on \(\Sigma\) extending \(c\).
\end{proposition}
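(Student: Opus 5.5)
Uniqueness is the only claim, so I would take two $\PC_n(H)$-valued cosurfaces $C$ and $C'$ on $\Sigma$, both extending $c$ on $E$, and show that $C(\sigma)=C'(\sigma)$ for every $\sigma\in\Sigma$. The natural tool is induction on the length of a generating chain. Since $E$ is generating, every $\sigma$ has a finite chain of admissible refinements connecting it to elements of $E\cup\bar E$. I would make this precise by defining $\Sigma_0:=E\cup\bar E$ and letting $\Sigma_{m+1}$ be $\Sigma_m$ together with all pieces that appear in a refinement relation whose other pieces already lie in $\Sigma_m$. The generating hypothesis is then read as $\Sigma=\bigcup_m\Sigma_m$. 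I would also check that each $\Sigma_m$ is stable under reversal, which holds by the orientation compatibility axiom (b) of $\mathcal R$.

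The base case is immediate. On $E$, both $C$ and $C'$ agree with $c$. On $\bar E$, the orientation rule (i) of a cosurface gives
\[
C(\bar\sigma)=C(\sigma)^{-\star}=c(\sigma)^{-\star}=c(\bar\sigma)=C'(\bar\sigma).
\]

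For the inductive step, suppose $C=C'$ on $\Sigma_m$ and let $\tau\in\Sigma_{m+1}$ arise from a relation involving only elements of $\Sigma_m$ besides $\tau$. There are two cases.
\begin{enumerate}[label=\textup{(\arabic*)}]
\item If $\tau$ is the coarse piece, so that $\tau\leadsto(\sigma_1,\dots,\sigma_k)$ with all $\sigma_i\in\Sigma_m$, then multiplicativity (ii) gives
\[
C(\tau)=C(\sigma_1)\star\cdots\star C(\sigma_k)=C'(\sigma_1)\star\cdots\star C'(\sigma_k)=C'(\tau).
\]
\item If $\tau=\sigma_j$ is one of the fine pieces, with $\sigma$ and the other $\sigma_i$ in $\Sigma_m$, I would solve for it using the group structure of $(\PC_n(H),\star)$:
\[
C(\sigma_j)=\bigl(C(\sigma_1)\star\cdots\star C(\sigma_{j-1})\bigr)^{-\star}\star C(\sigma)\star\bigl(C(\sigma_{j+1})\star\cdots\star C(\sigma_k)\bigr)^{-\star}.
\]
The right-hand side involves only values on $\Sigma_m$, so it is the same for $C'$. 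Reversed pieces are handled by the orientation rule, or equivalently by Proposition~\ref{prop:orientation-gluing-compatible} together with Lemma~\ref{lem:inverse-product}.
\end{enumerate}

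The main obstacle is interpretive rather than computational: the definition of "generating" is informal, so the key step is fixing the formalization of "obtained through a finite chain of admissible refinements" as above. With that reading, both directions of a refinement (coarsening and refining) determine values, thanks to cancellation in the group. The compatibility hypothesis in the statement is only needed for existence, not uniqueness. I would remark on this explicitly, and note that any well-defined assignment obtained by this recursion is forced, which gives "at most one".
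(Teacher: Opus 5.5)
Your proof is correct and is essentially the paper's argument: every value of $C$ is forced by the orientation and gluing axioms along a generating chain, so two extensions of $c$ coincide, and you make this rigorous by induction on the layers $\Sigma_m$. Your observation that the compatibility hypothesis matters only for existence is also right, even though the paper's sketch invokes it for path-independence.

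One point to tighten: case (2) needs $\tau$ to occupy exactly one position of the relation. For instance, take $\sigma\leadsto(\tau,\tau)$ with $\sigma\in E$ and $H=\mathbb{Z}/2$; every element of $\PC_n(H)$ squares to $\mathbf{1}$, so $C(\tau)$ is not forced. Hence $\Sigma_{m+1}$ should require all \emph{other positions}, not merely all other elements, to lie in $\Sigma_m$, which is what your case analysis implicitly uses. Under the paper's narrower reading of ``generating'' only case (1) is needed anyway.
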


\begin{proof}
Every value \(C(\sigma)\) is obtained from values on elementary pieces by iterated use of the gluing rule.
Hence uniqueness follows immediately from the defining axioms.
The only possible obstruction is independence of the chosen refinement path, which is exactly the stated compatibility condition.
\end{proof}

\begin{remark}
This proposition is a uniqueness statement rather than a general existence theorem.
In practice, existence is guaranteed when the refinement system comes from a concrete geometric subdivision procedure and the prescribed elementary data satisfy the induced compatibility relations.
\end{remark}

\section{Refinements and projective systems}

\subsection{Directed families of finite discretizations}

In order to pass from local comparative data to a multi-scale structure, we now consider directed families of finite oriented discretizations.

\begin{definition}
Let \(I\) be a directed partially ordered set.
A \emph{directed family of finite oriented discretizations} consists of the following data:
\begin{enumerate}[label=\textup{(\roman*)}]
\item for each \(r\in I\), a finite set \(\Sigma_r\) equipped with an involution
\[
\Sigma_r\to\Sigma_r,\qquad \sigma\mapsto \bar{\sigma},
\]
called orientation reversal;

\item for each pair \(s,r\in I\) with \(s\succeq r\), a refinement map
\[
\rho_{s,r}:\Sigma_r\longrightarrow W(\Sigma_s),
\]
where \(W(\Sigma_s)\) denotes the set of finite words in \(\Sigma_s\), such that the following conditions hold:
\begin{enumerate}[label=\textup{(\alph*)}]
\item \emph{identity:}
\[
\rho_{r,r}(\sigma)=(\sigma)
\qquad
\text{for all } \sigma\in\Sigma_r;
\]

\item \emph{orientation compatibility:} if
\[
\rho_{s,r}(\sigma)=(\sigma_1,\dots,\sigma_k),
\]
then
\[
\rho_{s,r}(\bar{\sigma})=(\bar{\sigma}_k,\dots,\bar{\sigma}_1);
\]

\item \emph{transitivity of refinement:} if \(t\succeq s\succeq r\), then
\[
\rho_{t,r}=\rho_{t,s}^{\#}\circ \rho_{s,r},
\]
where
\[
\rho_{t,s}^{\#}:W(\Sigma_s)\to W(\Sigma_t)
\]
is the map obtained by refining each letter and concatenating the resulting words.
\end{enumerate}
\end{enumerate}
\end{definition}

Let us make the notation \(\rho_{t,s}^{\#}\) explicit.
If
\[
w=(\sigma_1,\dots,\sigma_k)\in W(\Sigma_s)
\]
and
\[
\rho_{t,s}(\sigma_i)=\bigl(\tau_{i,1},\dots,\tau_{i,m_i}\bigr),
\]
then
\[
\rho_{t,s}^{\#}(w)
=
(\tau_{1,1},\dots,\tau_{1,m_1},\tau_{2,1},\dots,\tau_{k,m_k}).
\]

\begin{remark}
This structure is the finite-scale version of the oriented gluing systems introduced in Section~3.
The refinement map \(\rho_{s,r}\) tells us how a coarse oriented piece at scale \(r\) is decomposed into an ordered family of finer oriented pieces at scale \(s\).
The order matters as soon as the coefficient group \(H\) is noncommutative.
\end{remark}

\subsection{Configuration spaces at each scale}

We now define the comparative data carried by a finite discretization.

\begin{definition}
Let \(r\in I\).
The \emph{configuration space at scale \(r\)} is
\[
X_r:=
\left\{
x:\Sigma_r\to \PC_n(H)
\;\middle|\;
x(\bar{\sigma})=x(\sigma)^{-\star}
\text{ for all }\sigma\in\Sigma_r
\right\}.
\]
\end{definition}

Thus a configuration assigns a reciprocal pairwise comparison matrix to each oriented piece, with the natural inversion rule under reversal of orientation.

\begin{remark}
At this stage, \(X_r\) is an elementary configuration space: it records the values on the finite oriented pieces of the discretization.
The value of a larger composite piece is then obtained by multiplying along a refinement word, exactly as in the cosurface formalism of Section~3.
\end{remark}

Since \(\Sigma_r\) is finite, each \(X_r\) is finite whenever \(H\) is finite.

\begin{lemma}\label{lem:Xr-decomposition}
Fix \(r\in I\), and let \(E_r\subset \Sigma_r\) be a subset containing exactly one element from each pair \(\{\sigma,\bar{\sigma}\}\).
Then the restriction map
\[
X_r\longrightarrow \PC_n(H)^{E_r},
\qquad
x\longmapsto x|_{E_r},
\]
is a bijection.
\end{lemma}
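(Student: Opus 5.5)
The plan is to construct an explicit inverse to the restriction map, using only that \(\sigma\mapsto\bar\sigma\) is an involution and that \(\iota:A\mapsto A^{-\star}\) is an involution of \(\PC_n(H)\), since \((A^{-\star})^{-\star}=A\) in any group. First I record the partition structure. The involution partitions \(\Sigma_r\) into orbits \(\{\sigma,\bar\sigma\}\). The hypothesis on \(E_r\) means that every \(\tau\in\Sigma_r\) can be written as \(\tau=\sigma\) or \(\tau=\bar\sigma\) for a unique \(\sigma\in E_r\).

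\emph{Injectivity.} Suppose \(x,y\in X_r\) agree on \(E_r\). Take \(\tau\in\Sigma_r\). If \(\tau\in E_r\), then \(x(\tau)=y(\tau)\) directly. Otherwise \(\tau=\bar\sigma\) with \(\sigma\in E_r\), and the defining relation of \(X_r\) gives
\[
x(\tau)=x(\sigma)^{-\star}=y(\sigma)^{-\star}=y(\tau).
\]

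\emph{Surjectivity.} Given \(c\in\PC_n(H)^{E_r}\), I define \(x(\sigma)=c(\sigma)\) and \(x(\bar\sigma)=c(\sigma)^{-\star}\) for \(\sigma\in E_r\). This is well defined as long as no \(\bar\sigma\) lies in \(E_r\) except possibly \(\sigma\) itself. I then check the reciprocity rule \(x(\bar\tau)=x(\tau)^{-\star}\) in two cases.
\begin{itemize}
\item For \(\tau=\sigma\in E_r\), it holds by definition.
\item For \(\tau=\bar\sigma\), we have \(\bar\tau=\sigma\) because \(\bar{\bar\sigma}=\sigma\), and then
\[
x(\sigma)=c(\sigma)=\bigl(c(\sigma)^{-\star}\bigr)^{-\star}=x(\bar\sigma)^{-\star}.
\]
\end{itemize}
So \(x\in X_r\), and \(x|_{E_r}=c\).

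The main obstacle, and the only step that is not pure bookkeeping, is orbits of size one, i.e.\ pieces with \(\bar\sigma=\sigma\). For such a \(\sigma\), the definitions of \(x(\sigma)\) and \(x(\bar\sigma)\) collide. Membership in \(X_r\) then forces \(x(\sigma)=x(\sigma)^{-\star}\), i.e.\ every upper coefficient of \(x(\sigma)\) satisfies \(h^2=e\), so surjectivity onto all of \(\PC_n(H)\) fails unless \(H\) has exponent at most \(2\). I would therefore make explicit the standing assumption that the orientation reversal on \(\Sigma_r\) is fixed-point free. This is the geometrically natural case, since an oriented cell differs from its reverse. I would note that under this assumption the pairs \(\{\sigma,\bar\sigma\}\) are genuine two-element sets and the argument above goes through verbatim.

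In the general case, I would state the corrected conclusion: \(X_r\) is in bijection with
\[
\PC_n(H)^{E_r^{\mathrm{free}}}\times\{A\in\PC_n(H): A\star A=\mathbf 1\}^{E_r^{\mathrm{fix}}},
\]
where \(E_r^{\mathrm{free}}\) and \(E_r^{\mathrm{fix}}\) are the elements of \(E_r\) with \(\bar\sigma\neq\sigma\) and \(\bar\sigma=\sigma\) respectively. The proof is the same, applied orbit by orbit.
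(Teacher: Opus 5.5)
Your argument is the paper's: injectivity holds because the values on reversed pieces are forced by \(x(\bar\sigma)=x(\sigma)^{-\star}\), and surjectivity comes from extending \(c\) through the same rule. Your fixed-point caveat is correct and exposes a gap the paper's proof passes over: its step ``any map \(E_r\to\PC_n(H)\) extends uniquely to an element of \(X_r\)'' tacitly assumes the orientation reversal on \(\Sigma_r\) is fixed-point free (the definition only asks for an involution), since at \(\bar\sigma=\sigma\) one needs \(c(\sigma)\star c(\sigma)=\mathbf{1}\), so the lemma requires that hypothesis (or \(H\) of exponent at most \(2\)) and otherwise must be replaced by your product description; just reword your sentence ``well defined as long as no \(\bar\sigma\) lies in \(E_r\) except possibly \(\sigma\) itself'', because that exception is exactly where well-definedness fails.
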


\begin{proof}
Every \(x\in X_r\) is uniquely determined by its values on \(E_r\), because the values on the opposite orientations are prescribed by
\[
x(\bar{\sigma})=x(\sigma)^{-\star}.
\]
Conversely, any map \(E_r\to \PC_n(H)\) extends uniquely to an element of \(X_r\) by this rule.
\end{proof}

\begin{corollary}
For every \(r\in I\), the set \(X_r\) is naturally identified with a finite Cartesian power of \(\PC_n(H)\).
In particular, if \(H\) is finite, then \(X_r\) is finite.
\end{corollary}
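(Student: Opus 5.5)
The corollary follows by combining Lemma~\ref{lem:Xr-decomposition} with the group isomorphism \(\Phi\). First I fix \(r\in I\) and choose a set \(E_r\subset\Sigma_r\) of representatives, one element from each pair \(\{\sigma,\bar\sigma\}\). Since \(\Sigma_r\) is finite, such a choice exists and \(E_r\) is finite.

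\emph{Fixed points of the involution.} If \(\bar\sigma=\sigma\), the pair \(\{\sigma,\bar\sigma\}\) is a singleton, and the constraint becomes \(x(\sigma)=x(\sigma)^{-\star}\). This is exactly the point where the lemma needs care. I would handle it by following the lemma's convention: \(E_r\) contains one element from each pair, and this step is only literally correct when the involution is fixed-point free. As a safe alternative, I would state the identification as \(X_r\cong \PC_n(H)^{E'_r}\times \prod_{\sigma=\bar\sigma}\{A\in\PC_n(H):A\star A=\mathbf 1\}\). Here \(E'_r\) consists of representatives of the free orbits, and each fixed piece contributes a factor of order-two elements (including \(\mathbf 1\)). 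In either case \(X_r\) embeds in a finite power of \(\PC_n(H)\), and in the free case it equals such a power.

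\emph{Identification.} By Lemma~\ref{lem:Xr-decomposition}, restriction gives a bijection \(X_r\to\PC_n(H)^{E_r}\). It is natural in that it depends only on the choice of orientation representatives, and changing \(E_r\) composes the identification with the inversion \(\iota\) on the affected factors, which is itself a bijection. Composing with the corollary \(\PC_n(H)\cong H^{N}\), I obtain \(X_r\cong H^{N\cdot\#E_r}\), a finite Cartesian power.

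\emph{Finiteness.} If \(H\) is finite, then \(\#X_r=(\#H)^{N\cdot\#E_r}<\infty\). In the fixed-point variant \(X_r\) is a subset of a finite set, so it is again finite.
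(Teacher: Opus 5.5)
Your proof is correct and follows the paper's route. The corollary is read off directly from Lemma~\ref{lem:Xr-decomposition}: finiteness of \(\Sigma_r\) gives a finite exponent, and \(\PC_n(H)\cong H^{N}\) gives finiteness of \(X_r\) when \(H\) is finite. Your fixed-point caveat is also a genuine point the paper overlooks. The involution is never assumed fixed-point free, and a self-reversed piece forces \(x(\sigma)\star x(\sigma)=\mathbf{1}\). For example, \(H=\mathbb{Z}/4\), \(n=2\), \(\Sigma_r=\{\sigma\}\) with \(\bar\sigma=\sigma\) gives \(\#X_r=2\), which is not a power of \(4\). So the first assertion needs fixed-point-freeness, while your embedding argument shows that the finiteness assertion holds in general.
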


\subsection{Word evaluation and coarse graining}

The refinement maps induce transition maps between configuration spaces by ordered multiplication along refined words.

\begin{definition}
Let \(r\in I\), \(x\in X_r\), and let
\[
w=(\sigma_1,\dots,\sigma_k)\in W(\Sigma_r)
\]
be a finite word.
The \emph{evaluation} of \(w\) under \(x\) is
\[
\langle x,w\rangle
:=
x(\sigma_1)\star\cdots\star x(\sigma_k)\in \PC_n(H).
\]
For the empty word, if needed, we set
\[
\langle x,\varnothing\rangle:=\mathbf{1},
\]
where \(\mathbf{1}\) denotes the neutral element of \(\PC_n(H)\).
\end{definition}

The compatibility with orientation is immediate.

\begin{lemma}\label{lem:word-orientation}
Let \(x\in X_r\), and let
\[
w=(\sigma_1,\dots,\sigma_k)\in W(\Sigma_r).
\]
Define the reversed oriented word
\[
\bar{w}:=(\bar{\sigma}_k,\dots,\bar{\sigma}_1).
\]
Then
\[
\langle x,\bar{w}\rangle=\langle x,w\rangle^{-\star}.
\]
\end{lemma}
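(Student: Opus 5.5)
The plan is a direct computation combining the definition of word evaluation with Lemma~\ref{lem:inverse-product} and the defining inversion rule of \(X_r\).

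First I will write out the left-hand side: by definition of evaluation applied to the reversed word \(\bar w=(\bar\sigma_k,\dots,\bar\sigma_1)\),
\[
\langle x,\bar w\rangle = x(\bar\sigma_k)\star\cdots\star x(\bar\sigma_1).
\]
Since \(x\in X_r\), each factor satisfies \(x(\bar\sigma_i)=x(\sigma_i)^{-\star}\), so the product becomes \(x(\sigma_k)^{-\star}\star\cdots\star x(\sigma_1)^{-\star}\).

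Next I will invoke Lemma~\ref{lem:inverse-product} with \(A_i=x(\sigma_i)\). This identifies the product above with \(\bigl(x(\sigma_1)\star\cdots\star x(\sigma_k)\bigr)^{-\star}\), which is \(\langle x,w\rangle^{-\star}\) by the definition of evaluation. The argument mirrors the proof of Proposition~\ref{prop:orientation-gluing-compatible}, except that no refinement relation is used.

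The only point needing separate attention is the degenerate case of the empty word, which is not covered by Lemma~\ref{lem:inverse-product} since that lemma assumes \(k\geq 1\). Here \(\bar\varnothing=\varnothing\), and the claim reduces to \(\mathbf 1=\mathbf 1^{-\star}\), which holds in any group. There is no genuine obstacle: the content of the lemma is simply that the reversal of order in \(\bar w\) matches the order reversal in the inversion rule for products.
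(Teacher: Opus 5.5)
Your proposal is correct and follows essentially the same approach as the paper's proof: rewrite each factor via \(x(\bar\sigma_i)=x(\sigma_i)^{-\star}\), then apply Lemma~\ref{lem:inverse-product}. Your separate check of the empty word, where \(\mathbf{1}^{-\star}=\mathbf{1}\), is a small completeness point that the paper leaves implicit.
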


\begin{proof}
Since \(x(\bar{\sigma}_i)=x(\sigma_i)^{-\star}\), we have
\[
\langle x,\bar{w}\rangle
=
x(\bar{\sigma}_k)\star\cdots\star x(\bar{\sigma}_1)
=
x(\sigma_k)^{-\star}\star\cdots\star x(\sigma_1)^{-\star}.
\]
By Lemma~\ref{lem:inverse-product}, this is exactly
\[
\bigl(x(\sigma_1)\star\cdots\star x(\sigma_k)\bigr)^{-\star}
=
\langle x,w\rangle^{-\star}.
\qedhere
\]
\end{proof}

We may now define the coarse-graining maps.

\begin{definition}
For \(s\succeq r\), define
\[
\Pi_{s,r}:X_s\to X_r
\]
by
\[
(\Pi_{s,r}x)(\sigma):=\langle x,\rho_{s,r}(\sigma)\rangle
\qquad
\text{for all }\sigma\in\Sigma_r.
\]
\end{definition}

\begin{proposition}\label{prop:coarse-graining-well-defined}
For every \(s\succeq r\), the map \(\Pi_{s,r}\) is well defined, that is,
\[
\Pi_{s,r}(X_s)\subset X_r.
\]
\end{proposition}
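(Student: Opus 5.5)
To prove Proposition~\ref{prop:coarse-graining-well-defined}, we fix \(s\succeq r\) and \(x\in X_s\), and set \(y:=\Pi_{s,r}x\). By construction \(y\) is a map \(\Sigma_r\to\PC_n(H)\): each refinement word \(\rho_{s,r}(\sigma)\) is a finite word in \(\Sigma_s\), and its evaluation \(\langle x,\rho_{s,r}(\sigma)\rangle\) lies in \(\PC_n(H)\). If some refinement word happens to be empty, the convention \(\langle x,\varnothing\rangle=\mathbf{1}\) still gives a value in \(\PC_n(H)\). So membership in \(X_r\) reduces to a single condition, the orientation rule
\[
y(\bar\sigma)=y(\sigma)^{-\star}\qquad\text{for all }\sigma\in\Sigma_r.
\]

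To check it, fix \(\sigma\in\Sigma_r\) and write \(\rho_{s,r}(\sigma)=(\sigma_1,\dots,\sigma_k)=:w\). Axiom (b), orientation compatibility of the directed family, gives
\[
\rho_{s,r}(\bar\sigma)=(\bar\sigma_k,\dots,\bar\sigma_1)=\bar w,
\]
the reversed oriented word of Lemma~\ref{lem:word-orientation}. Since \(x\in X_s\), that lemma applies and yields
\[
y(\bar\sigma)=\langle x,\bar w\rangle=\langle x,w\rangle^{-\star}=y(\sigma)^{-\star},
\]
which is exactly the required identity.

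The only point that needs care is the degenerate case \(k=0\). There we use \(\mathbf{1}^{-\star}=\mathbf{1}\), and the reversed empty word is again empty, so the identity holds trivially. No genuine obstacle arises: the whole argument is the combination of axiom (b) with Lemma~\ref{lem:word-orientation}, which in turn rests on the group inversion rule of Lemma~\ref{lem:inverse-product}. In particular, transitivity (axiom (c)) is not needed for well-definedness. It will instead be the key ingredient later, for the functoriality \(\Pi_{s,r}\circ\Pi_{t,s}=\Pi_{t,r}\).
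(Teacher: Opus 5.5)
Your proof is correct and follows the paper's argument exactly: orientation compatibility (b) gives \(\rho_{s,r}(\bar\sigma)=(\bar\sigma_k,\dots,\bar\sigma_1)\), and Lemma~\ref{lem:word-orientation} then yields \((\Pi_{s,r}x)(\bar\sigma)=(\Pi_{s,r}x)(\sigma)^{-\star}\). Your separate treatment of the empty-word case is a small harmless addition that the paper leaves implicit.
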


\begin{proof}
Let \(x\in X_s\), and let \(\sigma\in\Sigma_r\).
Write
\[
\rho_{s,r}(\sigma)=(\sigma_1,\dots,\sigma_k).
\]
By orientation compatibility of the refinement maps,
\[
\rho_{s,r}(\bar{\sigma})=(\bar{\sigma}_k,\dots,\bar{\sigma}_1).
\]
Therefore, by Lemma~\ref{lem:word-orientation},
\[
(\Pi_{s,r}x)(\bar{\sigma})
=
\langle x,\rho_{s,r}(\bar{\sigma})\rangle
=
\langle x,\rho_{s,r}(\sigma)\rangle^{-\star}
=
(\Pi_{s,r}x)(\sigma)^{-\star}.
\]
Hence \(\Pi_{s,r}x\in X_r\).
\end{proof}

\begin{remark}
The map \(\Pi_{s,r}\) is the basic coarse-graining operator of the theory: it sends a fine comparative configuration to the corresponding coarse one by ordered multiplication along the chosen refinement words.
\end{remark}

\subsection{Projective compatibility}

The family \((X_r,\Pi_{s,r})\) is projectively compatible.

\begin{theorem}\label{thm:projective-system}
The family of configuration spaces \((X_r)_{r\in I}\), together with the maps
\[
\Pi_{s,r}:X_s\to X_r
\qquad (s\succeq r),
\]
forms a projective system in the category of sets. More precisely:
\begin{enumerate}[label=\textup{(\roman*)}]
\item
\[
\Pi_{r,r}=\id_{X_r}
\qquad
\text{for all }r\in I;
\]

\item for all \(t\succeq s\succeq r\),
\[
\Pi_{t,r}=\Pi_{s,r}\circ \Pi_{t,s}.
\]
\end{enumerate}
\end{theorem}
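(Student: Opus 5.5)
The plan is to verify the two functoriality identities directly from the definition of \(\Pi_{s,r}\) via word evaluation, using the identity and transitivity axioms of the refinement maps. Proposition~\ref{prop:coarse-graining-well-defined} already guarantees that each \(\Pi_{s,r}\) lands in \(X_r\), so only the equalities of maps remain.

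For (i), fix \(x\in X_r\) and \(\sigma\in\Sigma_r\). The identity axiom gives \(\rho_{r,r}(\sigma)=(\sigma)\). Hence \((\Pi_{r,r}x)(\sigma)=\langle x,(\sigma)\rangle=x(\sigma)\), since evaluation of a one-letter word is just the value of \(x\) on that letter.

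For (ii), the key intermediate statement is a \emph{concatenation lemma} for evaluation. For \(x\in X_t\) and any word \(w=(\sigma_1,\dots,\sigma_k)\in W(\Sigma_s)\), I will show
\[
\langle x,\rho_{t,s}^{\#}(w)\rangle=\langle \Pi_{t,s}x,\,w\rangle .
\]
The argument proceeds in three steps:
\begin{enumerate}[label=\textup{(\arabic*)}]
\item Evaluation is multiplicative under concatenation, \(\langle x,w_1w_2\rangle=\langle x,w_1\rangle\star\langle x,w_2\rangle\). This is associativity of \(\star\) in the group \((\PC_n(H),\star)\); when one of the words is empty, the convention \(\langle x,\varnothing\rangle=\mathbf{1}\) handles it.
\item By definition, \(\rho_{t,s}^{\#}(w)\) is the concatenation of the words \(\rho_{t,s}(\sigma_1),\dots,\rho_{t,s}(\sigma_k)\).
\item Applying step (1) to this concatenation gives
\[
\langle x,\rho_{t,s}^{\#}(w)\rangle=\prod_{i=1}^k{}^{\star}\,\langle x,\rho_{t,s}(\sigma_i)\rangle=\prod_{i=1}^k{}^{\star}\,(\Pi_{t,s}x)(\sigma_i)=\langle\Pi_{t,s}x,w\rangle,
\]
where the products are ordered \(\star\)-products in the order \(i=1,\dots,k\).
\end{enumerate}

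With this lemma, for \(x\in X_t\) and \(\sigma\in\Sigma_r\), transitivity \(\rho_{t,r}=\rho_{t,s}^{\#}\circ\rho_{s,r}\) yields
\[
(\Pi_{t,r}x)(\sigma)=\langle x,\rho_{t,s}^{\#}(\rho_{s,r}(\sigma))\rangle=\langle \Pi_{t,s}x,\rho_{s,r}(\sigma)\rangle=(\Pi_{s,r}(\Pi_{t,s}x))(\sigma).
\]
This proves (ii).

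The only real point of care is the concatenation lemma. Order must be preserved because \(H\), and hence \(\PC_n(H)\), may be noncommutative. The proof therefore has to use only associativity and never commutativity. A clean way to do this is induction on the length \(k\) of \(w\): peel off the last letter and apply associativity once. The empty-word edge case should be checked against the convention \(\langle x,\varnothing\rangle=\mathbf{1}\).
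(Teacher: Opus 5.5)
Your proof is correct and follows the paper's own route. Part (i) comes from the identity axiom \(\rho_{r,r}(\sigma)=(\sigma)\), and part (ii) writes \(\langle \Pi_{t,s}x,\rho_{s,r}(\sigma)\rangle\) as the ordered \(\star\)-product of the \(\langle x,\rho_{t,s}(\sigma_i)\rangle\), identifies it by associativity with the evaluation of \(\rho_{t,s}^{\#}(\rho_{s,r}(\sigma))\), and then applies transitivity; your concatenation lemma, with its care about order and the empty-word convention, simply packages explicitly the associativity step the paper uses inline.
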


\begin{proof}
Property \textup{(i)} follows from
\[
\rho_{r,r}(\sigma)=(\sigma),
\]
hence
\[
(\Pi_{r,r}x)(\sigma)=\langle x,(\sigma)\rangle=x(\sigma).
\]

For \textup{(ii)}, let \(x\in X_t\) and \(\sigma\in\Sigma_r\).
Write
\[
\rho_{s,r}(\sigma)=(\sigma_1,\dots,\sigma_k).
\]
Then
\[
(\Pi_{s,r}\circ\Pi_{t,s})(x)(\sigma)
=
\langle \Pi_{t,s}(x),\,(\sigma_1,\dots,\sigma_k)\rangle
=
(\Pi_{t,s}x)(\sigma_1)\star\cdots\star(\Pi_{t,s}x)(\sigma_k).
\]
By definition of \(\Pi_{t,s}\),
\[
(\Pi_{t,s}x)(\sigma_i)=\langle x,\rho_{t,s}(\sigma_i)\rangle.
\]
Thus
\[
(\Pi_{s,r}\circ\Pi_{t,s})(x)(\sigma)
=
\langle x,\rho_{t,s}(\sigma_1)\rangle
\star\cdots\star
\langle x,\rho_{t,s}(\sigma_k)\rangle.
\]
By associativity of \(\star\), this is exactly the evaluation of the concatenated word
\[
\rho_{t,s}^{\#}(\rho_{s,r}(\sigma)).
\]
Using the transitivity axiom
\[
\rho_{t,r}=\rho_{t,s}^{\#}\circ \rho_{s,r},
\]
we obtain
\[
(\Pi_{s,r}\circ\Pi_{t,s})(x)(\sigma)
=
\langle x,\rho_{t,r}(\sigma)\rangle
=
(\Pi_{t,r}x)(\sigma).
\]
Hence
\[
\Pi_{t,r}=\Pi_{s,r}\circ \Pi_{t,s}.
\qedhere
\]
\end{proof}

\subsection{The universal projective limit}

We may now define the limit object associated with the directed refinement system.

\begin{definition}
The \emph{projective limit} of the family \((X_r,\Pi_{s,r})\) is
\[
X_\infty
:=
\varprojlim_{r\in I} X_r
:=
\left\{
(x_r)_{r\in I}\in \prod_{r\in I} X_r
\;\middle|\;
\Pi_{s,r}(x_s)=x_r \text{ whenever } s\succeq r
\right\}.
\]
For each \(r\in I\), the canonical projection is denoted
\[
\pi_r:X_\infty\to X_r,\qquad \pi_r\bigl((x_t)_{t\in I}\bigr)=x_r.
\]
\end{definition}

An element of \(X_\infty\) is therefore a family of finite-scale comparative configurations compatible with all refinements.

\begin{proposition}
For every \(r\in I\) and every \(s\succeq r\),
\[
\Pi_{s,r}\circ \pi_s=\pi_r
\qquad
\text{on }X_\infty.
\]
\end{proposition}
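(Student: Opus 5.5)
The plan is to prove the identity pointwise, directly from the definition of \(X_\infty\) as a subset of \(\prod_{r\in I}X_r\). No earlier result is needed beyond the definitions, although Theorem~\ref{thm:projective-system} ensures that \(X_\infty\) is a genuine projective limit, so that the statement is meaningful.

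Fix \(r\in I\) and \(s\succeq r\), and let \(\xi=(x_t)_{t\in I}\in X_\infty\) be arbitrary. By definition of the canonical projections, \(\pi_s(\xi)=x_s\) and \(\pi_r(\xi)=x_r\). Hence
\[
(\Pi_{s,r}\circ\pi_s)(\xi)=\Pi_{s,r}(x_s).
\]
By the defining condition of \(X_\infty\), the family \(\xi\) satisfies \(\Pi_{s,r}(x_s)=x_r\) whenever \(s\succeq r\). Therefore
\[
\Pi_{s,r}(x_s)=x_r=\pi_r(\xi).
\]
Since \(\xi\) was arbitrary, the two maps \(X_\infty\to X_r\) coincide.

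There is no real obstacle here. The only point to be careful about is that \(\Pi_{s,r}\) takes values in \(X_r\), which is Proposition~\ref{prop:coarse-graining-well-defined}. This guarantees that the composite \(\Pi_{s,r}\circ\pi_s\) is a well-defined map \(X_\infty\to X_r\), so the equality is an equality of maps with the same codomain.
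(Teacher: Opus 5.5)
Your proof is correct and follows the paper's approach: the paper simply notes that the identity is immediate from the defining condition \(\Pi_{s,r}(x_s)=x_r\) of \(X_\infty\), and you have written that out pointwise. The appeal to the projective-system theorem is not needed for the statement to make sense; only the definition of \(X_\infty\) and the fact that \(\Pi_{s,r}\) maps into \(X_r\) are used.
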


\begin{proof}
This is immediate from the defining compatibility condition in \(X_\infty\).
\end{proof}

The next statement is the universal property that justifies the interpretation of \(X_\infty\) as the canonical object of compatible configurations across all scales.

\begin{theorem}\label{thm:universal-projective-limit}
The pair
\[
\bigl(X_\infty,(\pi_r)_{r\in I}\bigr)
\]
is a projective limit of the system \((X_r,\Pi_{s,r})\) in the category of sets.

Equivalently, if \(Y\) is a set and if one is given maps
\[
f_r:Y\to X_r
\qquad (r\in I)
\]
such that
\[
\Pi_{s,r}\circ f_s=f_r
\qquad
\text{for all } s\succeq r,
\]
then there exists a unique map
\[
f:Y\to X_\infty
\]
such that
\[
\pi_r\circ f=f_r
\qquad
\text{for all } r\in I.
\]
\end{theorem}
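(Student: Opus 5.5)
The plan is to verify the universal property directly from the explicit description of \(X_\infty\) as a subset of \(\prod_{r\in I}X_r\). This is the standard construction of limits in the category of sets. The only inputs are the definition of \(X_\infty\), the preceding proposition \(\Pi_{s,r}\circ\pi_s=\pi_r\), and Theorem~\ref{thm:projective-system}, which guarantees that \((X_r,\Pi_{s,r})\) is genuinely a projective system.

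First I will note that \((X_\infty,(\pi_r)_{r\in I})\) is a cone over the system. This is exactly the preceding proposition, so nothing new is needed here.

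Next, for existence, given a compatible family \(f_r:Y\to X_r\), I will define
\[
f(y):=\bigl(f_r(y)\bigr)_{r\in I}\in\prod_{r\in I}X_r .
\]
The compatibility hypothesis \(\Pi_{s,r}\circ f_s=f_r\) says precisely that \(\Pi_{s,r}(f_s(y))=f_r(y)\) whenever \(s\succeq r\). Hence \(f(y)\) satisfies the defining condition of \(X_\infty\), so \(f\) is a well-defined map \(Y\to X_\infty\). The identity \(\pi_r\circ f=f_r\) then holds by construction of the canonical projections.

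For uniqueness, suppose \(g:Y\to X_\infty\) also satisfies \(\pi_r\circ g=f_r\) for every \(r\). Then each component of \(g(y)\) equals \(f_r(y)\). An element of a Cartesian product is determined by its components, so \(g(y)=f(y)\) for every \(y\), and \(g=f\).

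Finally, the equivalence of the two formulations in the statement is the definition of a projective limit as a terminal cone. The argument above establishes exactly that terminality.

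There is no genuine obstacle. The only point requiring care is checking membership \(f(y)\in X_\infty\), which is just the compatibility hypothesis. One might also remark that \(X_\infty\) could be empty when \(H\) or \(I\) is infinite. This does not affect the universal property, since if \(X_\infty\) is empty then any \(Y\) admitting a compatible family must itself be empty.
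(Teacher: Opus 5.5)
Your argument is correct and matches the paper's proof: you define \(f(y)=(f_r(y))_{r\in I}\), note that the compatibility hypothesis is exactly the membership condition for \(X_\infty\), and get uniqueness because an element of the product is determined by its components. One aside is wrong but harmless: \(X_\infty\) can never be empty here, since the constant configuration \(x_r\equiv\mathbf{1}\) lies in every \(X_r\) (because \(\mathbf{1}^{-\star}=\mathbf{1}\)) and is preserved by every \(\Pi_{s,r}\), so \((\mathbf{1})_{r\in I}\in X_\infty\).
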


\begin{proof}
Define
\[
f:Y\to \prod_{r\in I} X_r,
\qquad
f(y):=(f_r(y))_{r\in I}.
\]
By the assumed compatibility,
\[
\Pi_{s,r}(f_s(y))=f_r(y)
\qquad
\text{for all }s\succeq r,
\]
hence \(f(y)\in X_\infty\).
Thus \(f\) indeed takes values in \(X_\infty\), and by construction
\[
\pi_r\circ f=f_r
\qquad
\text{for all }r.
\]

Uniqueness is immediate: if \(g:Y\to X_\infty\) also satisfies
\[
\pi_r\circ g=f_r
\qquad
\text{for all }r,
\]
then for every \(y\in Y\),
\[
g(y)=\bigl(\pi_r(g(y))\bigr)_{r\in I}
=
\bigl(f_r(y)\bigr)_{r\in I}
=
f(y).
\]
Hence \(g=f\).
\end{proof}

\begin{remark}
The object \(X_\infty\) should be viewed as the universal space of pairwise-comparison-valued configurations compatible with all scales of the discretization.
It is this object, rather than any single finite level \(X_r\), that encodes the full multi-scale content of the refinement system.
\end{remark}

\subsection{Relation with cosurfaces}

The projective system constructed above is an elementary configuration-level version of the cosurface formalism of Section~3.

Indeed, given \(x_r\in X_r\), one may evaluate \(x_r\) on any refinement word in \(W(\Sigma_r)\), and therefore on any composite piece built from the elementary oriented pieces of \(\Sigma_r\).
In this way, every finite-scale configuration determines a \(\PC_n(H)\)-valued cosurface on the oriented gluing system generated by \(\Sigma_r\) and its refinement words.
Conversely, every such cosurface restricts to an element of \(X_r\).

Thus the projective limit \(X_\infty\) may be interpreted as the universal object of finite-scale restrictions of compatible \(\PC_n(H)\)-valued cosurfaces across all discretization levels.

\subsection{A remark on group structures}

Although each set \(X_r\) is naturally identified with a finite Cartesian power of \(\PC_n(H)\), the transition maps \(\Pi_{s,r}\) are not, in general, homomorphisms for the pointwise group law on \(X_r\).
Indeed, the ordered multiplication defining coarse graining mixes the local factors in a nontrivial way.

Therefore, without further assumptions, the system \((X_r,\Pi_{s,r})\) is naturally a projective system in the category of sets, and later in the category of measurable or topological spaces.

A genuine projective system of groups reappears in special situations, for instance when the target group \(\PC_n(H)\) is abelian, or when all refinement words have length one.
For the general noncommutative multi-scale situation considered here, however, the set-theoretic projective formalism is the correct natural framework.

\section{Stochastic semantics}

\subsection{Measurable structures on the finite configuration spaces}

We now equip the finite-scale configuration spaces with probabilistic data.

For each \(r\in I\), let \(\mathcal F_r\) be a \(\sigma\)-algebra on \(X_r\) such that, for every \(s\succeq r\), the coarse-graining map
\[
\Pi_{s,r}:X_s\to X_r
\]
is \((\mathcal F_s,\mathcal F_r)\)-measurable.

\begin{remark}
When \(H\) is finite, each \(X_r\) is finite by Corollary~4.3, and the most natural choice is simply
\[
\mathcal F_r=\mathcal P(X_r).
\]
This is the main situation we have in mind for discrete stochastic models.
More generally, if \(H\) is endowed with a measurable structure, one may equip \(X_r\) with the induced product-type \(\sigma\)-algebra through the identification of Lemma~4.2.
\end{remark}

\begin{definition}
A \emph{probability law at scale \(r\)} is a probability measure
\[
\mathbb P_r:\mathcal F_r\to [0,1]
\]
on the measurable space \((X_r,\mathcal F_r)\).
\end{definition}

Thus a law at scale \(r\) describes a random assignment of pairwise-comparison data to the oriented pieces of the discretization \(\Sigma_r\).

\subsection{Compatibility under coarse graining}

The multi-scale nature of the construction is encoded by the requirement that finer laws project to coarser ones.

\begin{definition}
A family of probability laws
\[
(\mathbb P_r)_{r\in I}
\]
is called \emph{projectively compatible} if, for every \(s\succeq r\),
\[
(\Pi_{s,r})_{\ast}\mathbb P_s=\mathbb P_r.
\]
Equivalently,
\[
\mathbb P_r(A)=\mathbb P_s\bigl(\Pi_{s,r}^{-1}(A)\bigr)
\qquad
\text{for all }A\in\mathcal F_r.
\]
\end{definition}

This is the natural stochastic analogue of the compatibility condition defining the projective limit \(X_\infty\): the law seen at a coarse scale must coincide with the marginal of the law at any finer scale.

\begin{definition}
A \emph{stochastic \(\PC_n(H)\)-valued discretized cosurface system} is a projectively compatible family
\[
\bigl((X_r,\mathcal F_r,\mathbb P_r)\bigr)_{r\in I}.
\]
\end{definition}

\begin{remark}
At this stage, the stochastic object is not yet a single random point in a limiting space.
It is first given as a coherent family of finite-scale laws.
This point of view is especially natural when the limit object is defined projectively rather than as a concrete ambient space from the outset.
\end{remark}

\subsection{Cylinder sets and cylindrical observables on the limit space}

Let
\[
X_\infty=\varprojlim_{r\in I}X_r
\]
be the projective limit introduced in Section~4, and let
\[
\pi_r:X_\infty\to X_r
\]
denote the canonical projections.

\begin{definition}
A subset \(C\subset X_\infty\) is called a \emph{cylinder set} if there exist \(r\in I\) and \(A\in\mathcal F_r\) such that
\[
C=\pi_r^{-1}(A).
\]
The collection of all cylinder sets is denoted by
\[
\mathcal C_\infty.
\]
\end{definition}

\begin{lemma}
The family \(\mathcal C_\infty\) is an algebra of subsets of \(X_\infty\).
\end{lemma}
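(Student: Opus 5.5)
The plan is to verify the three algebra axioms directly: \(X_\infty\in\mathcal C_\infty\), closure under complements, and closure under finite intersections (equivalently, unions). The key tool will be that \(I\) is directed together with the identity \(\Pi_{s,r}\circ\pi_s=\pi_r\) on \(X_\infty\) from the proposition preceding Theorem~\ref{thm:universal-projective-limit}. We may also assume that each \(\Pi_{s,r}\) is \((\mathcal F_s,\mathcal F_r)\)-measurable, as stipulated at the start of this section. The only convention to fix is that \(I\) is nonempty, which is implicit in the term ``directed''.

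For the whole space, pick any \(r\in I\) and write \(X_\infty=\pi_r^{-1}(X_r)\), using \(X_r\in\mathcal F_r\). For complements, if \(C=\pi_r^{-1}(A)\) with \(A\in\mathcal F_r\), then \(X_\infty\setminus C=\pi_r^{-1}(X_r\setminus A)\), and \(X_r\setminus A\in\mathcal F_r\) because \(\mathcal F_r\) is a \(\sigma\)-algebra. Both steps are immediate.

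The main step is closure under finite intersections, where the two cylinders may be based at different scales. Let \(C_1=\pi_{r_1}^{-1}(A_1)\) and \(C_2=\pi_{r_2}^{-1}(A_2)\). By directedness, choose \(s\in I\) with \(s\succeq r_1\) and \(s\succeq r_2\). Using \(\pi_{r_i}=\Pi_{s,r_i}\circ\pi_s\), I rewrite
\[
C_i=\pi_s^{-1}\bigl(\Pi_{s,r_i}^{-1}(A_i)\bigr),
\]
and set \(B_i:=\Pi_{s,r_i}^{-1}(A_i)\). The set \(B_i\) lies in \(\mathcal F_s\) by measurability of \(\Pi_{s,r_i}\). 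Then
\[
C_1\cap C_2=\pi_s^{-1}(B_1\cap B_2)
\]
with \(B_1\cap B_2\in\mathcal F_s\), so \(C_1\cap C_2\) is a cylinder set. Closure under finite unions follows in the same way, or alternatively from complements via De Morgan.

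I expect no real obstacle. The only point needing care is this reduction to a common scale \(s\), which is precisely where directedness of \(I\) and measurability of the coarse-graining maps are used. I would also remark that \(\mathcal C_\infty\) is in general only an algebra and not a \(\sigma\)-algebra, since countably many scales need not admit a common upper bound. This is why cylindrical semantics, rather than an immediate measure on \(X_\infty\), is the natural framework here.
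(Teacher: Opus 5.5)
Your proof is correct and follows the same route as the paper: you handle the whole space and complements directly, then move two cylinders to a common upper scale $t$ via $\pi_r=\Pi_{t,r}\circ\pi_t$ to get intersections and unions. You also make explicit that $\Pi_{t,r}^{-1}(A)\in\mathcal F_t$ follows from the measurability assumption on the coarse-graining maps, which the paper uses only tacitly.
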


\begin{proof}
It is immediate that \(\varnothing\in\mathcal C_\infty\) and \(X_\infty\in\mathcal C_\infty\).
If \(C=\pi_r^{-1}(A)\), then
\[
X_\infty\setminus C=\pi_r^{-1}(X_r\setminus A)\in\mathcal C_\infty.
\]
Now let
\[
C_1=\pi_r^{-1}(A),\qquad C_2=\pi_s^{-1}(B),
\]
with \(A\in\mathcal F_r\) and \(B\in\mathcal F_s\).
Choose \(t\in I\) such that \(t\succeq r,s\).
Then
\[
C_1=\pi_t^{-1}\bigl(\Pi_{t,r}^{-1}(A)\bigr),
\qquad
C_2=\pi_t^{-1}\bigl(\Pi_{t,s}^{-1}(B)\bigr),
\]
hence
\[
C_1\cup C_2
=
\pi_t^{-1}\!\Bigl(\Pi_{t,r}^{-1}(A)\cup \Pi_{t,s}^{-1}(B)\Bigr)\in\mathcal C_\infty,
\]
and similarly for intersections.
\end{proof}

\begin{definition}
A function \(F:X_\infty\to \mathbb C\) is called a \emph{cylindrical observable} if there exist \(r\in I\) and a measurable function
\[
f_r:(X_r,\mathcal F_r)\to \mathbb C
\]
such that
\[
F=f_r\circ \pi_r.
\]
\end{definition}

\begin{remark}
Cylinder sets and cylindrical observables are the natural finite-resolution probes of the limit object.
They encode precisely the information accessible at a given discretization level.
In this sense, they provide the most intrinsic semantic layer on \(X_\infty\).
\end{remark}

\subsection{The cylindrical probability law on the projective limit}

A projectively compatible family of finite-scale laws determines canonically a finitely additive law on cylinder sets.

\begin{proposition}\label{prop:cylinder-premeasure}
Let
\[
(\mathbb P_r)_{r\in I}
\]
be a projectively compatible family of probability measures.
Define
\[
\mathbb P_\infty^{\mathrm{cyl}}:\mathcal C_\infty\to [0,1]
\]
by
\[
\mathbb P_\infty^{\mathrm{cyl}}\bigl(\pi_r^{-1}(A)\bigr):=\mathbb P_r(A),
\qquad
A\in\mathcal F_r.
\]
Then \(\mathbb P_\infty^{\mathrm{cyl}}\) is well defined and finitely additive on \(\mathcal C_\infty\).
\end{proposition}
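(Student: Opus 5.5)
The plan has two parts: first show that the value assigned to a cylinder set does not depend on how the set is represented, then reduce finite additivity to additivity of a single \(\mathbb P_t\) at a common finer scale.

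\textbf{Well-definedness.} Suppose a cylinder set has two representations,
\[
\pi_r^{-1}(A)=\pi_s^{-1}(B),\qquad A\in\mathcal F_r,\ B\in\mathcal F_s .
\]
Since \(I\) is directed, choose \(t\succeq r,s\). By the proposition \(\Pi_{t,r}\circ\pi_t=\pi_r\), both sets can be rewritten at scale \(t\):
\[
\pi_r^{-1}(A)=\pi_t^{-1}\bigl(A'\bigr),\quad A':=\Pi_{t,r}^{-1}(A),\qquad
\pi_s^{-1}(B)=\pi_t^{-1}\bigl(B'\bigr),\quad B':=\Pi_{t,s}^{-1}(B).
\]
Projective compatibility gives \(\mathbb P_r(A)=\mathbb P_t(A')\) and \(\mathbb P_s(B)=\mathbb P_t(B')\). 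It therefore suffices to show \(\mathbb P_t(A')=\mathbb P_t(B')\) whenever \(\pi_t^{-1}(A')=\pi_t^{-1}(B')\).

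\textbf{Main obstacle.} The equality \(\pi_t^{-1}(A')=\pi_t^{-1}(B')\) only yields \(A'\cap\pi_t(X_\infty)=B'\cap\pi_t(X_\infty)\). So \(A'\) and \(B'\) may differ off the image of \(\pi_t\), which is not assumed surjective or even nonempty.

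The first thing I would try is to show that the symmetric difference \(D:=A'\triangle B'\) is \(\mathbb P_t\)-null. Take \(x\in D\), so \(x\notin\pi_t(X_\infty)\). Then for every candidate thread through \(x\), compatibility fails at some finer level. Using compatibility of the laws, \(\mathbb P_t(D)=\mathbb P_u(\Pi_{u,t}^{-1}D)\) for every \(u\succeq t\). One wants to exhibit a level \(u\) where \(\Pi_{u,t}^{-1}D\) is empty, i.e.\ \(D\cap\Pi_{u,t}(X_u)=\varnothing\).

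This works in the finite-\(H\) setting, where \(X_r\) is finite by the corollary after Lemma~\ref{lem:Xr-decomposition}. There the images \(\Pi_{u,t}(X_u)\) form a decreasing net of finite sets, so it stabilizes. A standard compactness (König-type) argument on the finite projective system then shows that the stable image equals \(\pi_t(X_\infty)\). Hence \(D\) misses \(\Pi_{u,t}(X_u)\) for some \(u\), and \(\mathbb P_t(D)=0\).

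In general the same stabilization may fail. In that case I would state explicitly the standing hypothesis that each \(\pi_t\) is surjective (equivalently, that \(\pi_t(X_\infty)\) carries full outer measure, as holds for surjective \(\Pi_{u,t}\) with Mittag-Leffler-type conditions). Under that hypothesis \(\pi_t^{-1}(A')=\pi_t^{-1}(B')\) forces \(A'=B'\), and the step is trivial.

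\textbf{Finite additivity.} Let \(C_1=\pi_r^{-1}(A)\) and \(C_2=\pi_s^{-1}(B)\) be disjoint cylinder sets. As in the lemma showing \(\mathcal C_\infty\) is an algebra, pass to a common \(t\succeq r,s\) and write \(C_1=\pi_t^{-1}(A')\), \(C_2=\pi_t^{-1}(B')\) with \(A',B'\) as above. Disjointness gives \(\pi_t^{-1}(A'\cap B')=\varnothing\), so by the well-definedness step (comparing \(A'\cap B'\) with \(\varnothing\)) we get \(\mathbb P_t(A'\cap B')=0\). Then
\[
\mathbb P_\infty^{\mathrm{cyl}}(C_1\cup C_2)=\mathbb P_t(A'\cup B')=\mathbb P_t(A')+\mathbb P_t(B')=\mathbb P_\infty^{\mathrm{cyl}}(C_1)+\mathbb P_\infty^{\mathrm{cyl}}(C_2).
\]
Finally, \(\mathbb P_\infty^{\mathrm{cyl}}(X_\infty)=\mathbb P_r(X_r)=1\), and values in \([0,1]\) follow since each \(\mathbb P_r\) is a probability measure.
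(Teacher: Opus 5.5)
Your proof has the same skeleton as the paper's: pass to a common $t\succeq r,s$, rewrite both cylinders at level $t$ via $\Pi_{t,r}\circ\pi_t=\pi_r$, use projective compatibility, then use additivity of $\mathbb P_t$. At the decisive step, however, you are more careful than the paper. The paper simply asserts that, because $\Pi_{t,r}^{-1}(A)$ and $\Pi_{t,s}^{-1}(B)$ agree on $\pi_t(X_\infty)$, they have the same $\mathbb P_t$-measure. That is exactly the non sequitur you isolate as the ``main obstacle'': nothing in the paper shows that $X_t\setminus\pi_t(X_\infty)$ is $\mathbb P_t$-null.

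Your finite-$H$ argument genuinely closes this gap. Over an arbitrary directed set the images $\Pi_{u,t}(X_u)$ stabilize: take $u_0$ minimizing their cardinality. The transition maps then restrict to surjections between the stable images. Nonemptiness of inverse limits of nonempty finite sets gives a thread through every point of the stable image. Call this a Tychonoff-type compactness argument rather than K\"onig's lemma, since $I$ need not be countable. Your finite-additivity step also fills a second point the paper skips: you prove $\mathbb P_t(A'\cap B')=0$ instead of tacitly treating the level-$t$ representatives as disjoint.

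For general $H$, your extra hypothesis is not a weakness of your approach, because the proposition as stated is false. Take $n=2$, so $\PC_2(H)\cong H$. Let $H$ be the additive group $\mathbb Z_2$ of $2$-adic integers, $I=\mathbb N$, and $\Sigma_m=\{\sigma_m,\bar\sigma_m\}$. Let $\rho_{m',m}(\sigma_m)$ consist of $2^{m'-m}$ copies of $\sigma_{m'}$, and $\rho_{m',m}(\bar\sigma_m)$ of as many copies of $\bar\sigma_{m'}$.

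\begin{itemize}
\item All axioms hold, $X_m\cong\mathbb Z_2$, and $\Pi_{m+1,m}$ is $h\mapsto 2h$.
\item Since $\bigcap_k 2^k\mathbb Z_2=\{0\}$ and doubling is injective, $X_\infty$ consists of the trivial thread alone.
\item Give each $X_m$ the countable--cocountable $\sigma$-algebra, with $\mathbb P_m$ equal to $0$ on countable sets and $1$ on cocountable ones.
\item Injectivity of the transition maps makes them measurable and gives $(\Pi_{s,r})_\ast\mathbb P_s=\mathbb P_r$.
\item Yet $\pi_0^{-1}(\{0\})=X_\infty=\pi_0^{-1}(X_0)$, while $\mathbb P_0(\{0\})=0\neq 1=\mathbb P_0(X_0)$.
\end{itemize}

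Two small corrections to how you phrase the hypothesis. First, surjectivity of $\pi_t$ is not equivalent to $\pi_t(X_\infty)$ having full outer $\mathbb P_t$-measure. The latter is strictly weaker, and it is exactly what your argument needs: $X_t\setminus D$ is then a measurable superset of $\pi_t(X_\infty)$, hence has measure $1$. Second, surjective transition maps give surjective $\pi_t$ when $I$ has a countable cofinal subset, but not for arbitrary directed sets.
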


\begin{proof}
We first prove that the definition is independent of the chosen representation of the cylinder set.

Suppose
\[
\pi_r^{-1}(A)=\pi_s^{-1}(B),
\qquad
A\in\mathcal F_r,\quad B\in\mathcal F_s.
\]
Choose \(t\in I\) such that \(t\succeq r,s\).
Then
\[
\pi_r^{-1}(A)
=
\pi_t^{-1}\bigl(\Pi_{t,r}^{-1}(A)\bigr),
\qquad
\pi_s^{-1}(B)
=
\pi_t^{-1}\bigl(\Pi_{t,s}^{-1}(B)\bigr).
\]
Since these cylinder sets are equal, their defining subsets in \(X_t\) must agree on the image of \(\pi_t\), and therefore
\[
\mathbb P_t\bigl(\Pi_{t,r}^{-1}(A)\bigr)
=
\mathbb P_t\bigl(\Pi_{t,s}^{-1}(B)\bigr).
\]
By projective compatibility,
\[
\mathbb P_t\bigl(\Pi_{t,r}^{-1}(A)\bigr)=\mathbb P_r(A),
\qquad
\mathbb P_t\bigl(\Pi_{t,s}^{-1}(B)\bigr)=\mathbb P_s(B),
\]
hence
\[
\mathbb P_r(A)=\mathbb P_s(B).
\]
So \(\mathbb P_\infty^{\mathrm{cyl}}\) is well defined.

Finite additivity follows from the fact that any finite family of cylinder sets may be represented at a common level \(t\), where additivity reduces to the additivity of the probability measure \(\mathbb P_t\).
\end{proof}

This cylindrical law is the minimal stochastic semantics carried by the compatible family \((\mathbb P_r)_{r\in I}\).

\begin{definition}
The map
\[
\mathbb P_\infty^{\mathrm{cyl}}:\mathcal C_\infty\to [0,1]
\]
constructed in Proposition~\ref{prop:cylinder-premeasure} is called the \emph{cylindrical stochastic semantics} associated with the compatible family \((\mathbb P_r)_{r\in I}\).
\end{definition}

\subsection{Extension to a genuine probability measure}

To obtain a genuine probability measure on the limit space, one needs an extension theorem from the cylinder algebra to the \(\sigma\)-algebra it generates.

Let
\[
\mathcal F_\infty:=\sigma(\mathcal C_\infty)
\]
be the cylinder \(\sigma\)-algebra on \(X_\infty\).

\begin{theorem}\label{thm:extension-limit-measure}
Assume that the cylindrical law \(\mathbb P_\infty^{\mathrm{cyl}}\) extends to a probability measure
\[
\mathbb P_\infty
\]
on \((X_\infty,\mathcal F_\infty)\).
Then, for every \(r\in I\),
\[
(\pi_r)_{\ast}\mathbb P_\infty=\mathbb P_r.
\]
\end{theorem}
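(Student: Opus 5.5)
The argument is a direct unwinding of definitions; no measure-theoretic extension machinery is needed, since the existence of \(\mathbb P_\infty\) is assumed. Fix \(r\in I\). First we check that \(\pi_r:(X_\infty,\mathcal F_\infty)\to(X_r,\mathcal F_r)\) is measurable, so that the push-forward \((\pi_r)_\ast\mathbb P_\infty\) is a well-defined probability measure on \((X_r,\mathcal F_r)\). For every \(A\in\mathcal F_r\), the set \(\pi_r^{-1}(A)\) is by definition a cylinder set. It therefore belongs to \(\mathcal C_\infty\subset\sigma(\mathcal C_\infty)=\mathcal F_\infty\).

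Next we compare the two measures on an arbitrary \(A\in\mathcal F_r\). By the definition of push-forward,
\[
(\pi_r)_\ast\mathbb P_\infty(A)=\mathbb P_\infty\bigl(\pi_r^{-1}(A)\bigr).
\]
Since \(\mathbb P_\infty\) extends \(\mathbb P_\infty^{\mathrm{cyl}}\), and \(\pi_r^{-1}(A)\in\mathcal C_\infty\), this equals \(\mathbb P_\infty^{\mathrm{cyl}}\bigl(\pi_r^{-1}(A)\bigr)\). By the defining formula in Proposition~\ref{prop:cylinder-premeasure}, that value is \(\mathbb P_r(A)\). Because \(A\) was arbitrary, the two measures agree on all of \(\mathcal F_r\), and \((\pi_r)_\ast\mathbb P_\infty=\mathbb P_r\).

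The only point requiring care is that the formula \(\mathbb P_\infty^{\mathrm{cyl}}(\pi_r^{-1}(A))=\mathbb P_r(A)\) is used with the particular representation of the cylinder set at level \(r\). This is legitimate because Proposition~\ref{prop:cylinder-premeasure} shows the value does not depend on the representation chosen; that well-definedness was the genuine obstacle, and it has already been handled. Nothing about projective compatibility needs to be re-invoked here, since it is encoded in that proposition.
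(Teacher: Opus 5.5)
Your proof is correct and follows the paper's argument: first $\pi_r^{-1}(A)\in\mathcal C_\infty\subset\mathcal F_\infty$ for every $A\in\mathcal F_r$, then the chain $(\pi_r)_\ast\mathbb P_\infty(A)=\mathbb P_\infty(\pi_r^{-1}(A))=\mathbb P_\infty^{\mathrm{cyl}}(\pi_r^{-1}(A))=\mathbb P_r(A)$. Your last paragraph is harmless but not needed: a measure $\mathbb P_\infty$ assumed to extend the cylindrical law already satisfies $\mathbb P_\infty(\pi_r^{-1}(A))=\mathbb P_r(A)$ for every representation of a cylinder set, so the theorem does not depend on the well-definedness proposition.
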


\begin{proof}
Let \(A\in\mathcal F_r\).
By definition of \(\mathcal F_\infty\),
\[
\pi_r^{-1}(A)\in\mathcal C_\infty\subset \mathcal F_\infty.
\]
Hence
\[
(\pi_r)_{\ast}\mathbb P_\infty(A)
=
\mathbb P_\infty(\pi_r^{-1}(A)).
\]
Since \(\mathbb P_\infty\) extends \(\mathbb P_\infty^{\mathrm{cyl}}\),
\[
\mathbb P_\infty(\pi_r^{-1}(A))
=
\mathbb P_\infty^{\mathrm{cyl}}(\pi_r^{-1}(A))
=
\mathbb P_r(A).
\]
Therefore
\[
(\pi_r)_{\ast}\mathbb P_\infty=\mathbb P_r.
\qedhere
\]
\end{proof}

\begin{remark}
The existence of such an extension depends on the measure-theoretic setting.
In the finite case, or more generally under standard projective-limit hypotheses, this extension is classical; see for instance \cite{Rao1971,Schwartz1973,Yamasaki1985}.
For the purposes of the present article, the cylindrical semantics itself is already sufficient to encode the multi-scale stochastic content of the model.
\end{remark}

\subsection{Stochastic \(\PC_n(H)\)-valued cosurfaces}

We may now formulate the main stochastic notion associated with the projective system.

\begin{definition}
A \emph{stochastic \(\PC_n(H)\)-valued cosurface} on the directed refinement system \((\Sigma_r,\rho_{s,r})_{r\in I}\) is either:
\begin{enumerate}[label=\textup{(\roman*)}]
\item a projectively compatible family of probability laws
\[
(\mathbb P_r)_{r\in I}
\]
on the finite-scale configuration spaces \((X_r,\mathcal F_r)\); or equivalently,

\item when an extension exists, a probability measure
\[
\mathbb P_\infty
\]
on \((X_\infty,\mathcal F_\infty)\) whose marginals satisfy
\[
(\pi_r)_{\ast}\mathbb P_\infty=\mathbb P_r
\qquad
\text{for all }r\in I.
\]
\end{enumerate}
\end{definition}

\begin{remark}
The first formulation is intrinsic and does not require any additional extension theorem.
The second formulation is often more convenient conceptually, since it realizes the stochastic object as a law on the universal projective limit itself.
\end{remark}

\subsection{Expectation of cylindrical observables}

Cylinder observables admit a natural expectation with respect to the compatible family of finite-scale laws.

\begin{proposition}
Let
\[
F=f_r\circ \pi_r
\]
be a cylindrical observable on \(X_\infty\), where \(f_r:X_r\to\mathbb C\) is \(\mathcal F_r\)-measurable and integrable with respect to \(\mathbb P_r\).
Then the quantity
\[
\mathbb E_{\infty}^{\mathrm{cyl}}[F]
:=
\int_{X_r} f_r\, d\mathbb P_r
\]
is independent of the chosen representation of \(F\).
\end{proposition}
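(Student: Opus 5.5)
The plan is to reduce the question to a single common scale, exactly as in the proof of Proposition~\ref{prop:cylinder-premeasure}. Then the remaining ambiguity is absorbed by the well-definedness of the cylindrical law \(\mathbb P_\infty^{\mathrm{cyl}}\) established there. Suppose \(F=f_r\circ\pi_r=g_s\circ\pi_s\), with \(f_r\) measurable and \(\mathbb P_r\)-integrable on \(X_r\), and \(g_s\) measurable and \(\mathbb P_s\)-integrable on \(X_s\). We must show \(\int_{X_r} f_r\,d\mathbb P_r=\int_{X_s} g_s\,d\mathbb P_s\).

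\emph{Step 1 (lift to a common scale).} Since \(I\) is directed, choose \(t\succeq r,s\). Set
\[
\tilde f:=f_r\circ\Pi_{t,r},\qquad \tilde g:=g_s\circ\Pi_{t,s}.
\]
Both functions are \(\mathcal F_t\)-measurable, because the coarse-graining maps are measurable. Projective compatibility gives \((\Pi_{t,r})_\ast\mathbb P_t=\mathbb P_r\). The change-of-variables formula for image measures then yields
\[
\int_{X_r} f_r\,d\mathbb P_r=\int_{X_t}\tilde f\,d\mathbb P_t,
\]
with \(\tilde f\) being \(\mathbb P_t\)-integrable. Similarly, \(\int_{X_s} g_s\,d\mathbb P_s=\int_{X_t}\tilde g\,d\mathbb P_t\). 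It therefore suffices to show that \(\tilde f=\tilde g\) holds \(\mathbb P_t\)-almost everywhere.

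\emph{Step 2 (the two lifts agree on the image of \(\pi_t\)).} By the compatibility \(\Pi_{t,r}\circ\pi_t=\pi_r\) on \(X_\infty\), we get \(\tilde f\circ\pi_t=f_r\circ\pi_r=F\). Likewise \(\tilde g\circ\pi_t=F\). Hence \(\tilde f\) and \(\tilde g\) coincide on \(\pi_t(X_\infty)\). Consider the set
\[
D:=\{\tilde f\neq\tilde g\},
\]
which belongs to \(\mathcal F_t\) as the set where two measurable \(\mathbb C\)-valued functions differ. It satisfies \(\pi_t^{-1}(D)=\varnothing=\pi_t^{-1}(\varnothing)\).

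\emph{Step 3 (the main obstacle: \(D\) is null).} This is the only delicate point, because \(\pi_t\) need not be surjective. A priori \(\mathbb P_t\) could charge points of \(X_t\) that are not restrictions of any compatible family. I would handle this by invoking Proposition~\ref{prop:cylinder-premeasure} directly. Since \(\pi_t^{-1}(D)\) and \(\pi_t^{-1}(\varnothing)\) are two representations of the same cylinder set, well-definedness of \(\mathbb P_\infty^{\mathrm{cyl}}\) forces
\[
\mathbb P_t(D)=\mathbb P_t(\varnothing)=0.
\]
Consequently \(\tilde f=\tilde g\) holds \(\mathbb P_t\)-a.e., and the two integrals from Step~1 coincide, which proves the claim.

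If one prefers not to lean on that proposition, the fallback is to add a hypothesis guaranteeing that each \(\mathbb P_t\) is carried by \(\pi_t(X_\infty)\). Surjectivity of the \(\pi_t\) suffices, for example, when all \(\Pi_{s,r}\) are surjective and \(I\) has a cofinal sequence. Under such a hypothesis the argument is identical.
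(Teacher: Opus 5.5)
Your argument is the paper's argument: pass to $t\succeq r,s$, rewrite both integrals over $X_t$ via $(\Pi_{t,r})_\ast\mathbb P_t=\mathbb P_r$ and $(\Pi_{t,s})_\ast\mathbb P_t=\mathbb P_s$, and conclude from $\tilde f=\tilde g$ $\mathbb P_t$-a.e. The paper simply asserts that a.e.\ equality right after noting $\tilde f\circ\pi_t=\tilde g\circ\pi_t$. You correctly see that this does not follow when $\pi_t$ is not onto.

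Your Step~3, however, does not close the gap; it only relocates it. Well-definedness of $\mathbb P_\infty^{\mathrm{cyl}}$ is \emph{equivalent} to the assertion that every $D\in\mathcal F_t$ with $\pi_t^{-1}(D)=\varnothing$ is $\mathbb P_t$-null. One direction is to apply it to $\pi_t^{-1}(D)=\pi_t^{-1}(\varnothing)$; for the converse, use symmetric differences at a common level. The paper's proof of Proposition~\ref{prop:cylinder-premeasure} makes exactly the same unjustified jump (``agree on the image of $\pi_t$, and therefore\dots'').

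In fact, under the standing hypotheses of Section~5 the claim is false. Take the following data.
\begin{itemize}
\item $n=2$, so $\PC_2(H)\cong H$ via $A\mapsto a_{12}$, with $H=\langle q\rangle$ infinite cyclic.
\item $I=\mathbb N$ and $\Sigma_r=\{\sigma_r,\bar\sigma_r\}$.
\item $\rho_{s,r}(\sigma_r)$ is the word of $2^{s-r}$ copies of $\sigma_s$, and $\rho_{s,r}(\bar\sigma_r)$ is the corresponding word in $\bar\sigma_s$.
\end{itemize}
The axioms hold, $X_r\cong H$, and $\Pi_{s,r}(h)=h^{2^{s-r}}$. Since $e$ is the only element of $H$ that is a $2^k$-th power for all $k$, $X_\infty$ reduces to a single point, at which $\pi_0$ takes the value $e$. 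Put $\mathcal F_r=\{\varnothing,\{e\},H\setminus\{e\},H\}$. The maps $\Pi_{s,r}$ are measurable because $H$ is torsion-free, so $\Pi_{s,r}^{-1}(\{e\})=\{e\}$. Set $\mathbb P_r(H\setminus\{e\})=1$; this family is projectively compatible.

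Then $D=H\setminus\{e\}\in\mathcal F_0$ satisfies $\pi_0^{-1}(D)=\varnothing$ but $\mathbb P_0(D)=1$. So the indicator of $D$ and the zero function represent the same $F=0$, with expectations $1$ and $0$. Proposition~\ref{prop:cylinder-premeasure} fails for the same reason.

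An extra hypothesis is therefore indispensable, and your fallback is the right repair: each $\mathbb P_t$ carried by $\pi_t(X_\infty)$, e.g.\ surjective projections. This is automatic in the paper's main case, $H$ finite and $\mathcal F_r=\mathcal P(X_r)$. For finite $X_t$, the downward directed family $\Pi_{u,t}(X_u)$, $u\succeq t$, stabilizes at some $u_0$. The stable image equals $\pi_t(X_\infty)$ by a compactness argument. Hence $\mathbb P_t=(\Pi_{u_0,t})_\ast\mathbb P_{u_0}$ is carried by $\pi_t(X_\infty)$, which makes your $D$ null.
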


\begin{proof}
Assume that
\[
F=f_r\circ \pi_r=f_s\circ \pi_s
\]
for some \(r,s\in I\).
Choose \(t\in I\) with \(t\succeq r,s\).
Then
\[
F=(f_r\circ \Pi_{t,r})\circ \pi_t=(f_s\circ \Pi_{t,s})\circ \pi_t,
\]
hence
\[
f_r\circ \Pi_{t,r}=f_s\circ \Pi_{t,s}
\qquad
\mathbb P_t\text{-almost surely},
\]
and therefore
\[
\int_{X_t} f_r\circ \Pi_{t,r}\, d\mathbb P_t
=
\int_{X_t} f_s\circ \Pi_{t,s}\, d\mathbb P_t.
\]
Using projective compatibility,
\[
\int_{X_t} f_r\circ \Pi_{t,r}\, d\mathbb P_t
=
\int_{X_r} f_r\, d\mathbb P_r,
\qquad
\int_{X_t} f_s\circ \Pi_{t,s}\, d\mathbb P_t
=
\int_{X_s} f_s\, d\mathbb P_s.
\]
Hence
\[
\int_{X_r} f_r\, d\mathbb P_r
=
\int_{X_s} f_s\, d\mathbb P_s.
\qedhere
\]
\end{proof}

\begin{definition}
The map
\[
F\longmapsto \mathbb E_{\infty}^{\mathrm{cyl}}[F]
\]
defined on cylindrical observables is called the \emph{cylindrical expectation functional} associated with the stochastic \(\PC_n(H)\)-valued cosurface.
\end{definition}

\subsection{Interpretation}

The previous construction provides a stochastic semantics for pairwise-comparison-valued cosurfaces which is simultaneously local and multi-scale.

At each finite level \(r\), the law \(\mathbb P_r\) describes a random comparative configuration on the discretization \(\Sigma_r\).
Projective compatibility expresses the fact that the coarse information observed at scale \(r\) is exactly the marginal of the finer information available at scale \(s\).
The projective limit \(X_\infty\) then plays the role of a universal support for all compatible comparative configurations, while the cylindrical law \(\mathbb P_\infty^{\mathrm{cyl}}\) and the cylindrical expectation functional encode the stochastic content accessible through finite-resolution observations.

In this sense, a stochastic \(\PC_n(H)\)-valued cosurface is not merely a random matrix-valued field.
It is a coherent probabilistic organization of local comparative data across scales, compatible with orientation, ordered gluing, and refinement.
\section{Inconsistency observables and transversal interpretations}

\subsection{Local inconsistency observables at finite scale}

We now return to the triangular defects introduced in Section~2 and reinterpret them as observables on the finite-scale configuration spaces.

\begin{definition}
Fix \(r\in I\), an oriented piece \(\sigma\in\Sigma_r\), and a triple of pairwise distinct indices \(i,j,k\in\{1,\dots,n\}\).
The associated \emph{local inconsistency observable} is the map
\[
\mathcal K_{r,\sigma}^{ijk}:X_r\to H,
\qquad
\mathcal K_{r,\sigma}^{ijk}(x):=\kappa_{ijk}(x(\sigma)).
\]
\end{definition}

Thus \(\mathcal K_{r,\sigma}^{ijk}(x)\) measures the defect of multiplicative coherence of the pairwise-comparison matrix assigned by \(x\) to the oriented piece \(\sigma\).

\begin{lemma}
For every \(r\in I\), every \(\sigma\in\Sigma_r\), and every pairwise distinct \(i,j,k\),
\[
\mathcal K_{r,\bar{\sigma}}^{ijk}(x)
=
\bigl(\mathcal K_{r,\sigma}^{ikj}(x)\bigr)
\]
for all \(x\in X_r\).
In particular,
\[
\mathcal K_{r,\bar{\sigma}}^{ijk}(x)=e
\quad\Longleftrightarrow\quad
\mathcal K_{r,\sigma}^{ijk}(x)=e.
\]
\end{lemma}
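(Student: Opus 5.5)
The plan is a direct computation: read off the entries of \(x(\bar\sigma)\) and compare with the entries of \(x(\sigma)\). Write \(A=x(\sigma)=(a_{ij})\). Since \(x\in X_r\), we have \(x(\bar\sigma)=A^{-\star}\), so the left-hand side is \(\kappa_{ijk}(A^{-\star})\). The first step is to identify \(A^{-\star}\) entrywise using the explicit formula from the proof that \(\star\) is a group law. For \(p<q\) the entry is \(a_{pq}^{-1}=a_{qp}\), and for \(p>q\) it is \(a_{qp}\). Hence, with \(B=A^{-\star}\), one has \(b_{pq}=a_{qp}=a_{pq}^{-1}\) for all \(p,q\). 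In words, \(A^{-\star}\) is the transpose of \(A\), equivalently its entrywise inverse.

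The second step substitutes this into the triangular defect:
\[
\kappa_{ijk}(A^{-\star})=a_{ji}\,a_{kj}\,a_{ik}.
\]
We compare it with
\[
\kappa_{ikj}(A)=a_{ik}\,a_{kj}\,a_{ji},
\]
which by the earlier lemma equals \(\kappa_{ijk}(A)^{-1}\). The two words have the same three letters in reversed order. When these letters commute, in particular whenever \(H\) is abelian (e.g.\ \(H=\mathbb R_+^\times\)), the stated identity \(\mathcal K^{ijk}_{r,\bar\sigma}(x)=\mathcal K^{ikj}_{r,\sigma}(x)\) follows at once.

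The "in particular" part then comes from \(\kappa_{ikj}=\kappa_{ijk}^{-1}\): an element is \(e\) iff its inverse is.

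The main obstacle is the noncommutative case, and this is where I would be careful. In general \(a_{ji}a_{kj}a_{ik}=(a_{ki}a_{jk}a_{ij})^{-1}\), and \(a_{ki}a_{jk}a_{ij}\) is the \emph{reversed} word of \(\kappa_{ijk}(A)=a_{ij}a_{jk}a_{ki}\), not a cyclic conjugate of it. So I expect the literal equality, and even the equivalence of vanishing, to require either commutativity of the entries or a reinterpretation of the defect of \(\bar\sigma\) in the opposite group \(H^{\mathrm{op}}\). My first attempt would therefore prove the lemma as stated under the standing commutativity that makes \(\kappa\) order-insensitive. For general \(H\) I would record the exact identity
\[
\mathcal K^{ijk}_{r,\bar\sigma}(x)=\bigl(a_{ki}a_{jk}a_{ij}\bigr)^{-1},
\]
which is the \(H^{\mathrm{op}}\)-defect of \(\kappa_{ikj}\). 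I would then check whether the paper's subsequent use only needs the abelian (or central-defect) case.
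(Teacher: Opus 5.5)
Your computation is correct, and in the abelian case it is the paper's argument written out. The paper's proof is one step, \(\kappa_{ijk}(x(\sigma)^{-\star})=\kappa_{ikj}(x(\sigma))\), justified by the lemma \(\kappa_{ikj}(A)=\kappa_{ijk}(A)^{-1}\). That lemma, however, compares two defects of one and the same matrix and says nothing about \(A^{-\star}\). Once \(A^{-\star}\) is identified with the transpose of \(A\), as you do, the step is equivalent to \(a_{ki}a_{jk}a_{ij}=a_{ij}a_{jk}a_{ki}\). So the paper silently uses commutativity, and its proof is valid exactly in the case your proof covers. You also cannot fall back on a ``standing commutativity'': no such hypothesis exists, and the paper explicitly allows and emphasizes noncommutative \(H\).

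Your suspicion about the noncommutative case is right, and an explicit example settles it. Take \(n=3\), \(H=S_3\), \(u=(1\,2)\), \(v=(2\,3)\), and a piece with \(\sigma\neq\bar\sigma\). Set \(x(\sigma)=A\) with \(a_{12}=u\), \(a_{23}=v\), \(a_{13}=vu\); any value on \(\sigma\) is admissible by the restriction bijection \(X_r\cong\PC_n(H)^{E_r}\). Then
\[
\kappa_{123}(A^{-\star})=a_{21}a_{32}a_{13}=u^{-1}v^{-1}vu=e,
\qquad
\kappa_{123}(A)=uvu^{-1}v^{-1}=(uv)^2\neq e,
\]
and \(\kappa_{132}(A)=(uv)^{-2}=uv\neq e\), since \(uv\) is a \(3\)-cycle. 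So both the identity and the ``in particular'' equivalence fail for nonabelian \(H\).

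The statement therefore holds for abelian \(H\), or for those \(x,\sigma\) with \(a_{ij}a_{jk}a_{ki}=a_{ki}a_{jk}a_{ij}\). Your formula \(\mathcal K^{ijk}_{r,\bar\sigma}(x)=(a_{ki}a_{jk}a_{ij})^{-1}\), i.e.\ \(\kappa_{ikj}\) computed in \(H^{\mathrm{op}}\), is the correct general replacement.

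On downstream use: the lemma is never invoked later. Local coherence at scale \(r\) is defined by requiring vanishing on every \(\sigma\in\Sigma_r\), both orientations included, so nothing else breaks. The failure is nonetheless real: in the noncommutative case coherence is orientation-sensitive. Even a coherent \(A\) with \(a_{ij}=\lambda_i\lambda_j^{-1}\) and \(\lambda_1=e\) gives \(\kappa_{123}(A^{-\star})=\lambda_2\lambda_3\lambda_2^{-1}\lambda_3^{-1}\).
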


\begin{proof}
Since \(x(\bar{\sigma})=x(\sigma)^{-\star}\), the coefficient relations of Section~2 imply that triangular defects for opposite orientations are inverted up to reversal of the cyclic order.
More precisely, by Lemma~2.6 applied to the matrix \(x(\sigma)\),
\[
\kappa_{ijk}(x(\bar{\sigma}))
=
\kappa_{ijk}(x(\sigma)^{-\star})
=
\kappa_{ikj}(x(\sigma)).
\]
This is exactly the stated identity.
The final equivalence follows because both quantities are equal to the neutral element simultaneously.
\end{proof}

\begin{definition}
Let \(r\in I\).
A configuration \(x\in X_r\) is called \emph{locally coherent at scale \(r\)} if
\[
\mathcal K_{r,\sigma}^{ijk}(x)=e
\]
for every \(\sigma\in\Sigma_r\) and every triple of pairwise distinct indices \(i,j,k\).
\end{definition}

This is the direct finite-scale analogue of the notion of coherent \(\PC_n(H)\)-valued cosurface introduced earlier.

\subsection{Coarse-grained inconsistency observables}

The refinement formalism allows us to measure inconsistency not only on elementary pieces, but also on coarse pieces seen through finer configurations.

\begin{definition}
Let \(s\succeq r\), let \(\sigma\in\Sigma_r\), and let \(i,j,k\) be pairwise distinct.
The \emph{coarse-grained inconsistency observable} at level \(s\) associated with \((r,\sigma,i,j,k)\) is the map
\[
\mathcal K_{s\to r,\sigma}^{ijk}:X_s\to H,
\qquad
\mathcal K_{s\to r,\sigma}^{ijk}(x)
:=
\kappa_{ijk}\bigl((\Pi_{s,r}x)(\sigma)\bigr).
\]
\end{definition}

By definition,
\[
\mathcal K_{s\to r,\sigma}^{ijk}
=
\mathcal K_{r,\sigma}^{ijk}\circ \Pi_{s,r}.
\]
Thus the inconsistency of a coarse piece may be read as an observable on finer-scale configurations.

\begin{remark}
In general, coarse-grained inconsistency is not a simple function of the individual local inconsistencies of the refined pieces.
Indeed, the ordered product in \(\PC_n(H)\) may create or cancel defect terms, especially in the noncommutative case.
This is precisely one of the reasons why the refinement formalism is nontrivial: coherence does not propagate in a purely pointwise manner.
\end{remark}

\subsection{Cylindrical inconsistency observables on the projective limit}

The local and coarse-grained inconsistency observables defined above naturally lift to the projective limit.

\begin{definition}
Fix \(r\in I\), \(\sigma\in\Sigma_r\), and pairwise distinct \(i,j,k\).
The corresponding \emph{cylindrical inconsistency observable} on \(X_\infty\) is
\[
\mathcal K_{\infty,\sigma}^{ijk}:X_\infty\to H,
\qquad
\mathcal K_{\infty,\sigma}^{ijk}
:=
\mathcal K_{r,\sigma}^{ijk}\circ \pi_r.
\]
\end{definition}

\begin{proposition}
The map \(\mathcal K_{\infty,\sigma}^{ijk}\) is a cylindrical observable on \(X_\infty\).
Moreover, if \(s\succeq r\), then
\[
\mathcal K_{\infty,\sigma}^{ijk}
=
\mathcal K_{s\to r,\sigma}^{ijk}\circ \pi_s.
\]
\end{proposition}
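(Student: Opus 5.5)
The plan is to verify the two claims directly from the definitions, using only the compatibility relation $\Pi_{s,r}\circ\pi_s=\pi_r$ on $X_\infty$ established just after the definition of the projective limit.

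\textbf{First claim: $\mathcal K_{\infty,\sigma}^{ijk}$ is cylindrical.} By definition $\mathcal K_{\infty,\sigma}^{ijk}=f_r\circ\pi_r$ with $f_r:=\mathcal K_{r,\sigma}^{ijk}:X_r\to H$. So the only thing to check is measurability of $f_r$ with respect to $\mathcal F_r$.

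One small point needs attention here. The definition of cylindrical observable is phrased for $\mathbb C$-valued functions, whereas $f_r$ is $H$-valued. I would handle this by reading the definition with $\mathbb C$ replaced by the measurable space $H$. Equivalently, one may compose with an arbitrary measurable map $\phi:H\to\mathbb C$ and note that $\phi\circ f_r\circ\pi_r$ is cylindrical.

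Measurability of $f_r$ then splits into two cases.
\begin{itemize}
\item If $H$ is finite and $\mathcal F_r=\mathcal P(X_r)$, every map is measurable.
\item If $H$ carries a measurable structure making multiplication and inversion measurable, write $f_r$ as a composite. First take the evaluation $x\mapsto x(\sigma)$, which is a coordinate map under the identification of Lemma~\ref{lem:Xr-decomposition}; if $\sigma\notin E_r$, compose further with the $\star$-inversion. Then extract the coefficients $a_{ij},a_{jk},a_{ki}$, using reciprocity for lower-triangular entries. Finally multiply them in $H$. Each step is measurable, hence so is $f_r$.
\end{itemize}

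\textbf{Second claim: the identity for $s\succeq r$.} I would compute
\[
\mathcal K_{s\to r,\sigma}^{ijk}\circ\pi_s
=\mathcal K_{r,\sigma}^{ijk}\circ\Pi_{s,r}\circ\pi_s
=\mathcal K_{r,\sigma}^{ijk}\circ\pi_r
=\mathcal K_{\infty,\sigma}^{ijk}.
\]
The first equality is the displayed identity $\mathcal K_{s\to r,\sigma}^{ijk}=\mathcal K_{r,\sigma}^{ijk}\circ\Pi_{s,r}$ that follows the definition of coarse-grained observables. The second is the compatibility relation on $X_\infty$. This also shows the observable admits representations at every finer level $s$, consistently with the well-definedness of cylindrical expectations.

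\textbf{Main obstacle.} The algebra is trivial. The only genuine subtlety is the measurability hypothesis in the first claim, since the paper leaves the measurable structure on $H$ unspecified. I would state explicitly that $H$ is assumed to be a measurable group, or treat the finite or discrete case, where the issue disappears.
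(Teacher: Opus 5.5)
Your proposal is correct and follows the paper's proof. The identity for $s\succeq r$ is obtained exactly as in the paper, by substituting $\Pi_{s,r}\circ\pi_s=\pi_r$ into $\mathcal K_{s\to r,\sigma}^{ijk}=\mathcal K_{r,\sigma}^{ijk}\circ\Pi_{s,r}$. For the first claim the paper only says it is immediate from the definition. Your treatment of the $H$-valued versus $\mathbb C$-valued codomain and of the measurability of $\mathcal K_{r,\sigma}^{ijk}$ fills a point the paper glosses over, rather than taking a different route. That measurability is automatic for finite $H$ with $\mathcal F_r=\mathcal P(X_r)$, and otherwise holds for a measurable group with the product-type $\sigma$-algebra.
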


\begin{proof}
The first statement is immediate from the definition.
For the second one, let \(x_\infty\in X_\infty\).
Then
\[
\mathcal K_{s\to r,\sigma}^{ijk}(\pi_s(x_\infty))
=
\kappa_{ijk}\bigl((\Pi_{s,r}\pi_s(x_\infty))(\sigma)\bigr).
\]
Since \(\Pi_{s,r}\circ \pi_s=\pi_r\) on \(X_\infty\), this becomes
\[
\kappa_{ijk}\bigl((\pi_r(x_\infty))(\sigma)\bigr)
=
\mathcal K_{r,\sigma}^{ijk}(\pi_r(x_\infty))
=
\mathcal K_{\infty,\sigma}^{ijk}(x_\infty).
\]
Hence
\[
\mathcal K_{\infty,\sigma}^{ijk}
=
\mathcal K_{s\to r,\sigma}^{ijk}\circ \pi_s.
\qedhere
\]
\end{proof}

This result shows that inconsistency observables are genuinely projective objects: they may be probed at any sufficiently fine scale, while referring to a coarse geometric piece.

\begin{definition}
Let \((\mathbb P_r)_{r\in I}\) be a projectively compatible family of probability measures.
When \(H\) is endowed with a measurable structure and the observable \(\mathcal K_{\infty,\sigma}^{ijk}\) is measurable, its law under the stochastic cosurface is called the \emph{inconsistency law} associated with \((\sigma,i,j,k)\).
\end{definition}

Thus stochastic \(\PC_n(H)\)-valued cosurfaces carry not only comparative data, but also stochastic defect fields.

\subsection{Geometric interpretation}

The construction developed in the previous sections may first be read geometrically.

The basic geometric data are oriented pieces together with compatible refinement procedures.
A finite-scale configuration assigns to each oriented piece a reciprocal comparison matrix, and the orientation rule
\[
x(\bar{\sigma})=x(\sigma)^{-\star}
\]
plays the role of a reversal law.
Refinement words encode decomposition into ordered subpieces, and the coarse-graining maps are obtained by evaluating ordered products along these decompositions.

From this point of view, a \(\PC_n(H)\)-valued cosurface is a multiplicative geometric field carried by oriented codimension-one pieces.
The projective limit \(X_\infty\) may then be viewed as the universal space of all comparative fields compatible with all scales of the chosen discretization.
The inconsistency observables \(\mathcal K_{r,\sigma}^{ijk}\) or \(\mathcal K_{\infty,\sigma}^{ijk}\) play the role of local curvature-type quantities: they vanish precisely when the comparative data satisfy local multiplicative coherence.

This geometric reading is intentionally weak in its hypotheses.
It does not presuppose a smooth manifold, a bundle, or a connection.
What matters is only the syntax of oriented decomposition and ordered gluing.

\subsection{Algebraic interpretation}

The same construction admits an algebraic reading.

At the coefficient level, the central object is the group
\[
\PC_n(H),
\]
which is canonically isomorphic to \(H^{n(n-1)/2}\) but carries a matrix interpretation adapted to reciprocal comparative data.
The inversion law under orientation and the product law under refinement turn local comparison matrices into multiplicative algebraic labels.

At the scale of discretizations, the assignment
\[
r\longmapsto X_r,
\qquad
(s\succeq r)\longmapsto \Pi_{s,r}
\]
defines a contravariant system indexed by the refinement order.
Equivalently, the refinement syntax generates a diagram in the category of sets, and later in the category of measurable spaces.
The projective limit
\[
X_\infty=\varprojlim_r X_r
\]
is then the universal algebraic receptacle for compatible families of finite-scale configurations.

This point of view explains why the construction is not tied to one specific semantic regime.
The same syntax of refinement can be interpreted in purely algebraic terms before any probabilistic or causal layer is added.

\subsection{Probabilistic interpretation}

The probabilistic meaning of the construction was developed in Section~5, but it is worth stressing its conceptual content.

At each scale \(r\), a law \(\mathbb P_r\) describes a random comparative configuration.
The compatibility condition
\[
(\Pi_{s,r})_\ast \mathbb P_s=\mathbb P_r
\]
expresses the consistency of observations across scales: the random coarse configuration seen at level \(r\) is exactly the marginal of the finer-scale random configuration seen at level \(s\).

The projective limit \(X_\infty\), together with the cylindrical law
\[
\mathbb P_\infty^{\mathrm{cyl}},
\]
thus provides a stochastic semantics in which local comparative information is organized coherently across all discretizations.
The cylindrical observables are precisely those measurable probes that depend only on finitely many finite-scale degrees of freedom.

In this setting, the inconsistency observables become random variables, so that one may study not only whether local coherence holds, but also how defect distributions evolve under refinement and coarse graining.

\subsection{Causal interpretation}

Although no causal structure is imposed in the axioms, the refinement system admits a natural causal reading.

Indeed, if
\[
\rho_{s,r}(\sigma)=(\sigma_1,\dots,\sigma_k),
\]
then the coarse value associated with \(\sigma\) is obtained from the finer values associated with \(\sigma_1,\dots,\sigma_k\) through the ordered product in \(\PC_n(H)\).
This may be interpreted as a local mechanism
\[
\PC_n(H)^k \longrightarrow \PC_n(H),
\qquad
(A_1,\dots,A_k)\longmapsto A_1\star\cdots\star A_k,
\]
or, at the stochastic level, as a deterministic channel between random variables carried by different scales.

Thus the refinement relation can be read as a directed dependence from finer pieces to coarser pieces.
At this level of generality, the construction does not provide a full causal theory with interventions and latent variables.
However, it does isolate a natural multi-scale dependency structure, and this already places the model close to causal semantics in a broad sense.

In particular, coarse-grained inconsistency observables may be interpreted as effects induced by local comparative data propagating through the refinement mechanism.

\subsection{Transversal character of the construction}

One of the main features of the present framework is its transversal character.

First, it is transversal with respect to geometry: the formalism depends only on orientation and refinement, and therefore applies to triangulations, cell decompositions, piecewise smooth hypersurfaces, or any setting where ordered gluing makes sense.

Second, it is transversal with respect to semantics: the same refinement syntax can be interpreted deterministically, probabilistically, or causally, without changing the underlying mathematical object.

Third, it is transversal with respect to scale: the finite-level configuration spaces, the coarse-graining maps, and the projective limit organize local data and global compatibility within one unified structure.

Finally, it is transversal with respect to interpretation.
The matrices involved are simultaneously:
\begin{itemize}
\item geometric labels attached to oriented pieces,
\item algebraic elements of a group,
\item carriers of inconsistency defects,
\item and potential comparative observables in decision-type or learning-type settings.
\end{itemize}

For this reason, the proposed theory should not be viewed as belonging exclusively to one of these disciplines.
Rather, it provides a common mathematical layer underlying several types of local comparative models.

\subsection{Remarks toward comparative data and learning models}

Although the present article is primarily structural, the formalism suggests several directions in comparative data analysis and learning.

A first natural direction is hierarchical ranking.
In such a setting, local pieces may carry pairwise comparisons between alternatives or hypotheses, and refinements encode how fine comparative judgments combine into coarser decisions.
The projective framework then provides a mathematically controlled local-to-global passage.

A second direction concerns the aggregation of local or distributed annotations.
When several local sources provide comparative information on overlapping pieces or regions, one is led to study the stochastic propagation of comparison data and the resulting inconsistency fields.
The cylindrical semantics developed above provides a natural language for this purpose.

A third direction concerns regularization.
If a predictive system outputs \(\PC_n(H)\)-valued data on local pieces, then the defect observables
\[
\mathcal K_{r,\sigma}^{ijk}
\]
or their coarse-grained versions may serve as coherence penalties.
In this way, the present framework suggests multi-scale regularizers based on gluing consistency rather than solely on pointwise losses.

These remarks are only intended as indications.
Their role is not to shift the center of the paper toward applications, but rather to show that the mathematical structure developed here can already accommodate several families of local comparative models.

\subsection{Final synthesis}

We may summarize the overall picture as follows.

Starting from a directed family of finite oriented discretizations and a coefficient group \(H\), we construct:
\begin{enumerate}[label=\textup{(\roman*)}]
\item the pairwise comparison group \(\PC_n(H)\);
\item the finite-scale configuration spaces \(X_r\);
\item the coarse-graining maps \(\Pi_{s,r}\);
\item the universal projective limit \(X_\infty\);
\item compatible stochastic semantics through projective families of measures;
\item local, coarse-grained, and cylindrical inconsistency observables.
\end{enumerate}

This yields a multi-scale theory of pairwise-comparison-valued cosurfaces in which local relational data may be glued, propagated, and analyzed across scales.
The theory is geometric in its syntax, algebraic in its target, probabilistic in its stochastic realization, and open to causal or learning-theoretic interpretations.
It is precisely this coexistence of several compatible readings that gives the construction its transversal nature.
\section{Outlook}

The framework developed in this article suggests several directions for further investigation, both at the conceptual level and at the level of possible mathematical extensions.
Its main interest, in our view, lies in the fact that it organizes local relational data through a geometric syntax of orientation, refinement, and gluing, while remaining open to several semantic readings.
For this reason, the present work should be regarded less as a completed theory than as a foundational layer for a broader class of multi-scale relational models.

\subsection{Relation with generalized stochastic processes}

A first natural direction is to compare the present construction with established families of generalized stochastic models.

In white-noise analysis and in the theory of Kondratiev-type spaces, generalized stochastic processes are typically described as random distributions, often through chaos expansions and dual pairs of test and distribution spaces.
By contrast, the objects considered here are not distributional in origin.
They arise from finite configuration spaces of \(\PC_n(H)\)-valued comparative data, organized by refinement maps and projective compatibility.
Their primary structure is therefore geometric and combinatorial rather than functional-analytic.
Nevertheless, there is a genuine structural analogy: in both settings, the global stochastic object is reconstructed from compatible finite-level data, and cylindrical or projective viewpoints play a central role \cite{Rao1971,Schwartz1973,Yamasaki1985,LokkaOksendalProske2004}.

A second comparison concerns branching, birth--death, or Glauber-type processes on configuration spaces.
In such theories, randomness acts on the configuration itself, for instance through creation and annihilation mechanisms \cite{KondratievLytvynov2003,FinkelshteinKondratievKutoviy2011}.
In the present framework, the geometric support is fixed at the level of the refinement system, while the randomness acts on the relational labels carried by oriented pieces.
In this sense, the proposed model is closer to a projective stochastic theory of comparative fields than to a stochastic population dynamics.

These comparisons suggest that pairwise-comparison-valued cosurfaces should be viewed as a distinct class of generalized stochastic objects: not generalized processes in the strict sense of white-noise analysis, and not branching processes on evolving configuration spaces, but rather projective stochastic systems of relational data on a refined geometric support.

\subsection{Toward time-dependent and subordinated dynamics}

The present article is essentially concerned with spatial and multi-scale organization.
No intrinsic time evolution has been imposed.
A natural extension would therefore consist in introducing a time-dependent family of compatible laws
\[
(\mathbb P_{r,t})_{r\in I,\; t\geq 0},
\]
or, equivalently, a time-indexed stochastic process with values in the projective limit \(X_\infty\), whenever such a realization exists.

This would open the way to a kinetic theory of \(\PC_n(H)\)-valued cosurfaces, where refinement compatibility is preserved at every time.
Once such a dynamics is available, one may further consider random time changes and subordinated evolutions, in the spirit of fractional kinetic models \cite{KochubeiKondratievDaSilva2020,Kondratiev2022}.
From that viewpoint, the present framework may serve as a spatial and relational substrate for future time-dependent theories, including non-Markovian or fractional extensions.

\subsection{Further semantic regimes}

Another direction concerns the enlargement of the semantic setting.
In the present work, the main stochastic semantics is classical and cylindrical.
However, the refinement syntax itself is independent of this particular choice.
This suggests the possibility of interpreting the same projective structure in broader probabilistic or causal frameworks, and perhaps eventually in quantum or generalized probabilistic settings \cite{MilzSakuldeePollockModi2020}.

Such extensions would require additional structure, and they are not pursued here.
Still, the current construction indicates that the passage from local comparative data to compatible multi-scale objects is not tied to one unique semantic regime.
This semantic flexibility is, in our opinion, one of the most significant conceptual features of pairwise-comparison-valued cosurfaces.

\subsection{Foundational perspective}

From a foundational point of view, the main lesson of the present work is that comparative information can be organized without starting from absolute observables.
Instead, one may begin with local relational data, impose compatibility under orientation reversal and gluing, and reconstruct a global object only projectively.
The resulting theory is therefore closer in spirit to a relational and multi-scale ontology than to a theory of globally given states.

In this respect, the inconsistency observables introduced above play a particularly important role.
They do not merely quantify numerical defects; they measure obstructions to global coherence.
This gives them an interpretation akin to curvature-type quantities for relational structures.
Such a viewpoint may be useful well beyond the specific pairwise-comparison setting considered here.

\subsection{Comparative data and learning-theoretic perspectives}

Although the present article is not primarily application-driven, the framework also points toward several families of comparative data models.
Hierarchical ranking, aggregation of local annotations, and coherence regularization in learning systems all involve local relational information that must be propagated across scales.
The formalism developed here suggests that these apparently different situations may share a common mathematical substrate.

We do not pursue these directions in the present paper.
Their importance lies mainly in showing that the proposed construction is not isolated: it may serve as a common language for local comparative models arising in decision, stochastic geometry, and information processing.

\subsection{Final remarks}

The theory of \(\PC_n(H)\)-valued cosurfaces proposed here should therefore be understood as a first step toward a broader study of multi-scale relational systems.
Its geometric syntax, projective organization, and stochastic semantics place it at the crossroads of several traditions, while preserving a clear internal identity.
Whether this framework ultimately develops toward generalized stochastic analysis, time-dependent relational dynamics, or more explicitly physical or informational models remains an open question.
What seems already clear, however, is that pairwise-comparison-valued cosurfaces provide a mathematically coherent way to think about local relational data beyond the level of isolated matrices.

\vskip 12pt

\paragraph{\bf Conflict of interest statement:} The author declares no conflict of interest.

\vskip 12pt

\paragraph{\bf Acknowledgements:} J.-P.M  thanks the France 2030 framework programme Centre Henri Lebesgue ANR-11-LABX-0020-01 
for creating an attractive mathematical environment.

\vskip 12pt

\paragraph{\bf Declaration of generative AI and AI-assisted technologies in the writing process}

During the preparation of this work the author used ChatGPT and Mistral AI in order to smoothen the expression in English. After using this tool/service, the author reviewed and edited the content as needed and takes full responsibility for the content of the publication.

\end{document}